\tikzset{
to/.style={->,},
tob/.style={->,bend right=40},
tos/.style={->},
eq/.style={color=white},
}
\newcommand{\toAB}[2]{%
to node[yshift=0.4em,right=1.1] {\scriptsize $#1$}%
   node[yshift=-0.6em,right=1.3] {\scriptsize \makebox[0mm][r]{$#2$}}%
}
\newcommand{\tobAB}[2]{%
to node[yshift=0.9em,right=1.1] {\scriptsize $#1$}%
   node[yshift=0.5em,right=1.3] {\scriptsize \makebox[0mm][r]{$#2$}}%
}
\newcommand{\doAB}[2]{%
to node[xshift=0.4em,below=0.2] {\scriptsize $#1$}%
   node[xshift=-0.4em,below=0.2] {\scriptsize \makebox[0mm][r]{$#2$}}%
}
\theoremstyle{definition}
\newtheorem{definition}{Definition}
\newtheorem{example}[definition]{Example}
\theoremstyle{plain}
\newtheorem{lemma}[definition]{Lemma}
\newtheorem{theorem}[definition]{Theorem}
\DeclareMathAlphabet{\mathcal}{OMS}{xmdcmsy}{m}{n}
\DeclareSymbolFont{letters}{OML}{cmbboard}{m}{it}
\newlength{\tlen}
\let\QEDSYMBOL\qedsymbol
\newcommand\relgt{\succ}
\newcommand{\HOLE}{\ensuremath{\Box}\xspace}
\newcommand{\QED}{\hfill\QEDSYMBOL}
\newcommand{\UDE}[1]{\UU^{\scalebox{0.6}{+}}_\eta\vspace{-0.2ex}{(#1)}}
\newcommand\BM{\begin{pmatrix}}
\newcommand\EM{\end{pmatrix}}
\newcommand{\NE}{\hspace{-0.4em}&\hspace{-0.4em}}
\newcommand{\NR}{\\}
\newcommand{\NF}{\mi{NF}}
\newcommand{\TTT}{\textsf{T\!\raisebox{-1mm}{T}\!T}\xspace}
\newcommand{\TTTT}{$\TTT\textsf{\!\raisebox{-1mm}{2}}$\xspace}
\newcommand{\m}[1]{\mathsf{#1}}
\newcommand{\mi}[1]{\mathit{#1}}
\renewcommand{\AA}{\mathcal{A}}
\newcommand{\CC}{\mathcal{C}}
\newcommand{\FF}{\mathcal{F}}
\newcommand{\MM}{\mathcal{M}}
\newcommand{\NN}{\mathbb{N}}
\newcommand{\OO}{\mathcal{O}}
\newcommand{\PP}{\mathcal{P}}
\newcommand{\RR}{\mathcal{R}}
\renewcommand{\SS}{\mathcal{S}}
\newcommand{\UU}{\mathcal{U}}
\newcommand{\VV}{\mathcal{V}}
\newcommand{\Pos}{\PP\m{os}}
\newcommand{\aar}{\mathrm{aa}}
\newcommand\TERMS[2]{\mathcal{T}(#1,#2)}
\newcommand\FVTERMS{\TERMS{\FF}{\VV}}
\newcommand{\CURRYf}[1]{{{#1}{\downarrow}_{\CC(\FF)}}}
\newcommand{\CURRY}[1]{{{#1}{\downarrow}_\CC}}
\newcommand{\CURRYp}[1]{{{#1}{\downarrow}_{\CC'}}}
\newcommand{\UNCURRYr}[1]{{{#1}{\downarrow}_{\UU(\RR)}}}
\newcommand{\UNCURRY}[1]{{{#1}{\downarrow}_\UU}}
\newcommand{\superterm}{\mathrel{{\trianglerighteq}}}
\newcommand{\prsuperterm}{\mathrel{{\vartriangleright}}}
\newcommand{\seq}[2][n]{{#2_1},\dots,{#2_{#1}}}
\newcommand{\ieto}{\mathrel{\smash{\overset{\raisebox{-2pt}{\scriptsize
$\m{i}$}}{\to}}}^\epsilon}
\newcommand{\ito}{\mathrel{\smash{\overset{\raisebox{-2pt}{\scriptsize
$\m{i}$}}{\to}}}}
\newcommand{\rito}{\mathrel{\smash{\overset{\raisebox{-2pt}{\scriptsize
$\m{ri}$}}{\to}}}}
\newcommand{\jc}{{~}}
\newcommand{\dl}[2]{\mathrm{dh}({#2},{#1})}
\newcommand{\dc}[2]{\mathrm{dc}_{#1}({#2})}
\newcommand{\idc}[2]{\mathrm{idc}_{#1}({#2})}
\newcommand{\Nat}{\ensuremath{\mathbb{N}}\xspace}
\newcommand{\from}{\leftarrow}
\def\test#1#2#3{\setbox0=\hbox{$\vphantom{#1}^{#2}_{#3}$}%
                \dimen0=\wd0%
                \setbox1=\hbox{$\scriptstyle #2$}%
                \advance\dimen0-\wd1%
                \setbox1=\hbox{\hskip\dimen0\copy1}%
                \dimen0=\wd0%
                \setbox2=\hbox{$\scriptstyle #3$}%
                \advance\dimen0-\wd2%
                \setbox2=\hbox{\hskip\dimen0\copy2}%
                {\vphantom{#1}^{\box1}_{\box2}}{#1}
}
\newcommand{\fromTB}[2]{\mathrel{\test{\from}{#1}{#2}}}
\newcommand{\MINISMT}{\textsf{MiniSmt}\xspace}
\title{Uncurrying for Innermost Termination and Derivational Complexity%
\thanks{%
This research is supported by FWF (Austrian Science Fund) project P18763
and the Grant-in-Aid for Young Scientists Nos.\,20800022 and 22700009 of 
the Japan Society for the Promotion of Science.
}
}
\author{%
Harald Zankl,\textsuperscript{1}
Nao Hirokawa,\textsuperscript{2} 
and Aart Middeldorp\textsuperscript{1}%
\institute{%
\textsuperscript{1}
Institute of Computer Science,
University of Innsbruck, Austria 
%\\
\email{$\{$harald.zankl,aart.middeldorp$\}$@uibk.ac.at}
}
\institute{%
\textsuperscript{2}
School of Information Science,
Japan Advanced Institute of Science and Technology, Japan 
%\\
\email{hirokawa@jaist.ac.jp}
}
}
\begin{document}
\maketitle

\begin{abstract}
First-order applicative term rewriting systems provide a natural
framework for modeling higher-order aspects. In 
earlier work we introduced an uncurrying transformation
which is termination
preserving and reflecting. In this paper we investigate how this
transformation behaves for innermost termination and (innermost)
derivational complexity. We prove that it reflects innermost termination
and innermost derivational complexity and that it preserves and
reflects polynomial derivational complexity. For the preservation of
innermost termination and innermost derivational complexity we give
counterexamples. Hence uncurrying may be used as a preprocessing
transformation for innermost termination proofs and establishing
polynomial upper and lower bounds on the derivational complexity.
Additionally it may be used to establish upper bounds on the innermost
derivational complexity while it neither is sound for proving innermost
non-termination nor for obtaining lower bounds on the innermost
derivational complexity.
\end{abstract}

\section{Introduction}

Proving termination of first-order applicative term rewrite systems is
challenging since the rules lack sufficient structure. But these systems 
are important since they provide a natural framework for modeling
higher-order aspects found in functional programming languages.
Since proving termination is easier for
innermost than for full rewriting we lift some of
the recent results from \cite{HMZ08} from full to innermost termination.
For the properties that do not transfer to the innermost setting we
provide counterexamples. Furthermore we show that the uncurrying
transformation is suitable for proving upper bounds on the (innermost)
derivational complexity.

We remark that our approach on proving innermost termination also is
beneficial for functional programming languages that adopt a lazy 
evaluation strategy since applicative term rewrite systems modeling
functional programs are left-linear and non-overlapping. It is well 
known that for this class of systems termination and innermost termination
coincide (see~\cite{G95} for a more general result).

\smallskip 

The remainder of this paper is organized as follows.
After recalling preliminaries in
Section~\ref{PRE:main}, we show that uncurrying preserves innermost
non-termination (but not innermost termination) in Section~\ref{INN:main}.
In Section~\ref{DER:main} we show that it preserves and reflects
derivational complexity of rewrite systems while it only reflects
innermost derivational complexity.
Section~\ref{EXP:main} reports on experimental results
and we conclude in Section~\ref{conclusion}.

\section{Preliminaries}
\label{PRE:main}

In this section we fix preliminaries on rewriting, complexity and
uncurrying.

\subsection{Term Rewriting}

\newcommand{\Emph}[1]{#1}
We assume familiarity with term rewriting \cite{BN98,TeReSe}.
Let $\FF$ be a signature and $\VV$ a set of variables disjoint from~$\FF$.
By $\FVTERMS$ we denote the set of terms over $\FF$ and $\VV$.
The \Emph{size} of a term $t$ is denoted $|t|$.
A \Emph{rewrite rule} is a pair of terms $(\ell,r)$, written $\ell \to r$,
such that $\ell$ is not a variable and all variables in $r$
occur in $\ell$.
A \Emph{term rewrite system} (TRS for short) is a set of rewrite rules.
A TRS $\RR$ is said to be \Emph{duplicating} if there exist a rewrite
rule $\ell \to r \in \RR$ and a variable $x$ that occurs more often in $r$ 
than in $\ell$. 

\smallskip 

\Emph{Contexts} are terms over the
signature $\FF \cup \{\HOLE\}$ with exactly one occurrence of the fresh
constant \HOLE (called \Emph{hole}). The expression $C[t]$
denotes the result of replacing the hole in $C$ by the term $t$.
A \Emph{substitution} $\sigma$ is a mapping from variables to terms
and $t\sigma$ denotes the result of replacing the variables in $t$
according to $\sigma$. Substitutions may change only finitely many
variables (and are thus written as
$\{ x_1 \mapsto t_1, \dots, x_n \mapsto t_n \}$).
The \Emph{set of positions} of a term $t$ is defined as
$\Pos(t) = \{ \epsilon \}$ if $t$ is a variable and as
$\Pos(t) = \{ \epsilon \} \cup \{ iq \mid q \in \Pos(t_i) \}$
if $t = f(\seq{t})$. 
Positions are used to address occurrences of subterms.
The \Emph{subterm of $t$ at position $p \in \Pos(t)$} is defined as
$t|_p = t$ if $p = \epsilon$ and as
$t|_p = t_i|_q$ if $p = iq$. We say a position $p$ is
\Emph{to the right of} a position $q$ if $p = p_1ip_2$ and $q = q_1jq_2$
with $p_1 = q_1$ and $i > j$. For a term $t$ and positions
$p, q \in \Pos(t)$ we say $t|_{p}$ is to the right of $t|_{q}$ if $p$ is
to the right of $q$.

\smallskip 

A \Emph{rewrite relation} is a binary relation on terms that is
closed under contexts and substitutions. For a TRS $\RR$ we define
$\to_\RR$ to be the smallest rewrite relation that contains $\RR$.
We call $s \to_\RR t$ a \Emph{rewrite step} if there exist a
context $C$, a rewrite rule $\ell \to r \in \RR$, and a
substitution $\sigma$ such that $s = C[\ell\sigma]$ and $t = C[r\sigma]$. 
In this case we call $\ell\sigma$ a \Emph{redex} and say that
$\ell\sigma$ has been \Emph{contracted}.
A \Emph{root rewrite step}, denoted by $s \to_\RR^\epsilon t$,
has the shape $s = \ell\sigma \to_\RR r\sigma = t$
for some $\ell \to r \in \RR$.
A \Emph{rewrite sequence} is a sequence of rewrite steps.
The \Emph{set of normal forms} of a TRS $\RR$ is defined as
$\NF(\RR) = \{ t \in \FVTERMS \mid \text{$t$ contains no redexes} \}$.
A redex $\ell\sigma$ in a term $t$ is called \Emph{innermost}
if proper subterms of $\ell\sigma$ are normal forms,
and \Emph{rightmost innermost} if in addition $\ell\sigma$ is to the right
of any other redex in $t$.
A rewrite step is called \Emph{innermost} (\Emph{rightmost innermost})
if an innermost (rightmost innermost) redex is contracted, written
$\ito$ and $\rito$, respectively.

\smallskip 

If the TRS $\RR$ is not essential or clear from the context the subscript
${}_\RR$ is omitted in $\to_\RR$ and its derivatives.
As usual, $\to^+$ ($\to^*$) denotes the transitive (reflexive and
transitive) closure of $\to$ and $\to^m$ its $m$-th iterate.
A TRS is \Emph{terminating} (\Emph{innermost terminating}) if $\to^+$
($\ito^+$) is well-founded.

\smallskip 

Let $\PP$ be a property of TRSs and let $\Phi$ be a transformation on
TRSs with $\Phi(\RR) = \RR'$. We say $\Phi$ \emph{preserves} 
$\PP$ if $\PP(\RR)$ implies $\PP(\RR')$ and $\Phi$ \emph{reflects} 
$\PP$ if $\PP(\RR')$  implies $\PP(\RR)$. Sometimes we call $\Phi$
$\PP$ preserving if $\Phi$ preserves $\PP$ and $\PP$ reflecting if $\Phi$
reflects $\PP$, respectively.

\subsection{Derivational Complexity}

For complexity analysis we assume TRSs to be finite and (innermost)
terminating.

\smallskip 

Hofbauer and Lautemann~\cite{HL89} introduced the concept of
derivational complexity for terminating TRSs. The idea is to measure
the maximal length of rewrite sequences (derivations) depending on the
size of the starting term. Formally, the \emph{derivation height} of a
term $t$ (with respect to a finitely branching and well-founded
order $\to$) is defined on natural numbers as 
$\dl{\to}{t} = \max \{ m \in \NN \mid \text{$t \to^m u$ for some $u$} \}$.
The \emph{derivational complexity} $\dc{\RR}{n}$ of a TRS $\RR$ is 
then defined as
$\dc{\RR}{n} = \max \{ \dl{\to_\RR}{t} \mid |t| \leqslant n \}$.
Similarly we define
the \emph{innermost} derivational complexity as
$\idc{\RR}{n} = \max \{ \dl{\ito_\RR}{t} \mid |t| \leqslant n \}$.
Since we regard finite TRSs only, these functions are well-defined
if $\RR$ is (innermost) terminating. If $\dc{\RR}{n}$ is bounded
by a linear, quadratic, cubic, $\dots$ function or polynomial, $\RR$
is said to have linear, quadratic, cubic, $\dots$ or polynomial
derivational complexity. A similar convention applies to $\idc{\RR}{n}$.

\smallskip 

For functions $f,g \colon \Nat \to \Nat$ we write $f(n) \in \OO(g(n))$ if
there are constants $M, N \in \Nat$ such that
$f(n) \leqslant M \cdot g(n) + N$ for all $n \in \Nat$. 

\smallskip 

One popular method to prove polynomial upper bounds on the derivational
complexity is via triangular matrix interpretations \cite{MSW08}, which
are a special instance of monotone algebras.
An \Emph{$\FF$-algebra} $\AA$ consists of a non-empty carrier $A$
and a set of interpretations $f_\AA$ for every $f \in \FF$. By
$[\alpha]_\AA(\cdot)$
we denote the usual evaluation function
of $\AA$ according to an assignment $\alpha$
which maps variables to values in~$A$.
An $\FF$-algebra $\AA$ together with a well-founded order $\relgt$
on $A$ is called a \emph{monotone algebra} if
every $f_\AA$ is monotone with respect to $\relgt$.
Any monotone algebra $(\AA,{\relgt})$ induces a well-founded
order on terms: $s \relgt_\AA t$ if for any assignment $\alpha$
the condition $[\alpha]_\AA(s) \relgt [\alpha]_\AA(t)$ holds.
A TRS $\RR$ is \Emph{compatible} with a monotone
algebra $(\AA,{\relgt_\AA})$ if $l \relgt_\AA r$ for all $l \to r \in \RR$.

\smallskip 

\emph{Matrix interpretations} $(\MM,{\relgt})$
(often just denoted $\MM$) are a special form of monotone algebras.
Here the carrier is $\Nat^{d}$ for some fixed
dimension $d \in \Nat \setminus \{ 0 \}$. The order
$\relgt$ is defined on $\NN^d$ as
$(\seq[d]{u}) \relgt (\seq[d]{v})$ if 
$u_1 >_{\Nat} v_1$ and
$u_i \geqslant_{\Nat} v_i$ for all $2 \leqslant i \leqslant d$.
If every $f \in \FF$ of arity $n$ is interpreted as
$f_\MM(\vec{x_1},\dots,\vec{x_n}) = F_1\vec{x_1} + \dots + F_n\vec{x_n} +
\vec{f}$ where $F_i \in \Nat^{d\times d}$ 
for all $1 \leqslant i \leqslant n$
and $\vec{f} \in \Nat^d$ then
monotonicity of $\relgt$ is achieved by demanding
${F_i}_{(1,1)} \geqslant 1$ for any $1 \leqslant i \leqslant n$.
Such interpretations have been introduced in \cite{EWZ08}.

\smallskip 

A matrix interpretation where for every $f \in \FF$ all $F_i$
($1 \leqslant i \leqslant n$ where $n$ is the arity of $f$)
are upper triangular is called \emph{triangular} (abbreviated by TMI).
A square matrix $A$ of dimension $d$ is of \emph{upper triangular}
shape if $A_{(i,i)} \leqslant 1$ and $A_{(i,j)} = 0$ if $i > j$ for all
$1 \leqslant i, j \leqslant d$.
The next theorem is from \cite{MSW08}.

\smallskip

\begin{theorem}
\label{THM:tmi}
If a TRS $\RR$ is compatible with a TMI $\MM$ of dimension $d$
then $\dc{\RR}{n} \in \OO(n^d)$.
\end{theorem}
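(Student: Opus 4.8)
The plan is to bound the derivation height of an arbitrary term $t$ by reading off information from the matrix interpretation $\MM$. The key quantity to track is the first component of the interpretation vector, since the order $\relgt$ on $\Nat^d$ requires a strict decrease exactly in position $1$ (with the remaining components allowed to stay equal or decrease). Concretely, if $\RR$ is compatible with $\MM$ then every rewrite step $s \to_\RR u$ forces $[\alpha]_\MM(s) \relgt [\alpha]_\MM(u)$ for every assignment $\alpha$, hence the first component strictly decreases along any rewrite sequence. Therefore $\dl{\to_\RR}{t}$ is bounded by the first component of $[\alpha]_\MM(t)$ for any fixed assignment (say the one sending every variable to the zero vector), and the whole argument reduces to showing that this first component grows at most polynomially, of degree $d$, in the size $|t|$.

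First I would make the evaluation concrete: fix the assignment $\alpha_0$ mapping each variable to $\vec{0}$ and write $[t] = [\alpha_0]_\MM(t) \in \Nat^d$. Because each $f_\MM$ is affine with matrices of upper triangular shape, the vector $[t]$ is obtained by composing the linear parts $F_i$ along the term tree and adding the constant vectors $\vec{f}$. The crucial structural fact is that for an upper triangular matrix the diagonal entries satisfy $A_{(i,i)} \leqslant 1$, so each $F_i$ is a nonnegative upper triangular matrix with diagonal entries in $\{0,1\}$. Such matrices have the property that entries of their products grow only polynomially; more precisely, a product of $m$ such $d\times d$ matrices has entries bounded by a polynomial in $m$ of degree at most $d-1$ (the maximal nilpotent ``shift length'' in an upper triangular matrix of dimension $d$).

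Next I would combine the two observations. The first component of $[t]$ is a sum, over paths from the root of $t$ to its subterms, of products of the matrices $F_i$ encountered along the path applied to the constant vectors $\vec{f}$ at the nodes. Each such product involves at most $|t|$ matrices, so by the polynomial growth bound each contributes $\OO(|t|^{d-1})$ to any fixed matrix entry; summing over the at most $|t|$ nodes of $t$ yields a first component that is $\OO(|t|^{d})$. Since $\dl{\to_\RR}{t}$ is bounded by this first component, and the definition of $\dc{\RR}{n}$ maximizes over terms of size at most $n$, we obtain $\dc{\RR}{n} \in \OO(n^{d})$.

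The main obstacle is the middle step, namely establishing the polynomial degree-$(d-1)$ bound on the entries of products of upper triangular nonnegative matrices whose diagonal entries are at most $1$. This is where the upper triangular shape is genuinely used rather than just the affine form: one must argue that repeated multiplication cannot cause exponential blow-up, which fails for general nonnegative matrices. I would prove this by an induction on the dimension $d$, peeling off the last row and column and using the bound on the diagonal entry to control how much ``mass'' can be shifted upward at each multiplication; the degree increases by one each time one moves one diagonal closer to the top-right corner, giving the exponent $d-1$ for a single product and hence $d$ after summing over the linearly many positions in $t$.
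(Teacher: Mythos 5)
Your proposal is correct and is essentially the standard argument: the paper itself gives no proof of Theorem~\ref{THM:tmi} but imports it from \cite{MSW08}, and the proof there proceeds exactly as you do --- bounding $\dl{\to_\RR}{t}$ by the first component of the interpretation of $t$ (using compatibility plus monotonicity, i.e.\ closure under contexts and substitutions, so the first component drops by at least $1$ per step), and then showing that products of $m$ upper triangular matrices over $\Nat$ with diagonal entries at most $1$ drawn from the fixed finite set $\{F_i\}$ have entries in $\OO(m^{d-1})$, since any nonzero contribution uses at most $d-1$ strictly super-diagonal factors, which after summing over the at most $|t|$ paths in the term gives the $\OO(|t|^d)$ bound. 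The only point worth making explicit is that the constants in these bounds depend on the (finitely many) matrix and vector entries of $\MM$, which is fine because the paper assumes TRSs to be finite for complexity analysis.
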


\smallskip

Recent generalizations of this theorem are reported
in~\cite{NZM10,W10}.

\subsection{Uncurrying}

This section recalls definitions and results from \cite{HMZ08}.

\smallskip 

An \emph{applicative} term rewrite system (ATRS for short) is a TRS over
a signature that consists of constants and a single binary function
symbol called application 
which is
denoted by the infix and left-associative
symbol $\circ$. In examples we often use juxtaposition instead of $\circ$.
Every ordinary TRS can be transformed into an ATRS by currying.
Let $\FF$ be a signature. The currying system $\CC(\FF)$
consists of the rewrite rules
\[
f_{i+1}(x_1,\dots,x_i,y) \to f_i(x_1,\dots,x_i) \circ y 
\]
for every $n$-ary function symbol $f \in \FF$ and every
$0 \leqslant i < n$. Here $f_n = f$ and, for every $0 \leqslant i < n$,
$f_i$ is a fresh function symbol of arity $i$.
The currying system $\CC(\FF)$ is confluent and terminating. Hence
every term $t$ has a unique normal form $\CURRYf{t}$.
For instance,
$\m{f(a,b)}$ is transformed into $\m{f} \jc \m{a} \jc \m{b}$.
Note that we write $f$ for $f_0$.

\smallskip 

Next we recall the uncurrying transformation from \cite{HMZ08}.
Let $\RR$ be an ATRS over a signature $\FF$. The \emph{applicative arity}
$\aar(f)$ of a constant $f \in \FF$ is defined
as the maximum $n$ such that $f \circ t_1 \circ \cdots \circ t_n$ is a
subterm in the left- or right-hand side of a rule in $\RR$. This
notion is extended to terms as follows:
$\aar(t) = \aar(f)$ if $t$ is a constant $f$ and
$\aar(t_1) - 1$ if $t = t_1 \circ t_2$.
Note that $\aar(t)$ is undefined if the head symbol of $t$ is a
variable. The uncurrying system 
$\UU(\RR)$ consists of the rewrite rules
\[
f_i(x_1,\dots,x_i) \circ y \to f_{i+1}(x_1,\dots,x_i,y)
\]
for every constant $f \in \FF$ and every $0 \leqslant i < \aar(f)$. 
Here $f_0 = f$ and, for every $i > 0$, $f_i$ is a fresh function symbol of
arity $i$. We say that $\RR$ is \emph{left head variable free} if
$\aar(t)$ is defined for every non-variable subterm $t$ of a left-hand
side of a rule in $\RR$.
This means that no subterm of a left-hand side in $\RR$ is
of the form $t_1 \circ t_2$ where $t_1$ is a variable.
The uncurrying system $\UU(\RR)$, or simply $\UU$, is confluent and
terminating. Hence every term $t$ has a unique normal form $\UNCURRY{t}$.
The \emph{uncurried} system $\UNCURRY{\RR}$ is the TRS consisting of the
rules $\UNCURRY{\ell} \to \UNCURRY{r}$ for every $\ell \to r \in \RR$.
However the rules of $\UNCURRY{\RR}$ are not enough to simulate an
arbitrary rewrite sequence in $\RR$. The natural idea is now to add
$\UU(\RR)$, but still $\UNCURRYr{\RR} \cup \UU(\RR)$ is not enough
as shown in the next example from \cite{HMZ08}.

\begin{table}
\begin{xalignat*}{5}
&\makebox[\tlen][c]{$\RR$} &
&\makebox[\tlen][c]{$\UU(\RR)$} &
&\makebox[\tlen][c]{$\UNCURRYr{\RR}$} &
&\makebox[\tlen][c]{$\RR_\eta$} &
&\makebox[\tlen][c]{$\UNCURRYr{\RR_\eta}$}
\\
\hline
\m{id} \jc x   &\to x &
\m{id} \circ x &\to \m{id}_1(x) &
\m{id}_1(x)    &\to x &
\m{id} \jc x   &\to x &
\m{id}_1(x)    &\to x
\\
\m{f} \jc x         &\to \m{id} \jc \m{f} \jc x &
\m{id}_1(x) \circ y &\to \m{id}_2(x,y) &
\m{f}_1(x)          &\to \m{id}_2(\m{f},x) &
\m{f} \jc x         &\to \m{id} \jc \m{f} \jc x &
\m{f}_1(x)          &\to \m{id}_2(\m{f},x)
\\
&&
\m{f} \circ x &\to \m{f}_1(x) &
&&
\m{id} \jc x \jc y  &\to x \jc y &
\m{id}_2(x,y)       &\to x \circ y
\\
\hline
\end{xalignat*}
\vspace{-3em}
\caption{Some (transformed) TRSs}
\label{TAB:trafo}
\end{table}

\begin{example}
Consider the TRS~$\RR$ in Table~\ref{TAB:trafo}. 
Based on $\aar(\m{id}) = 2$ and $\aar(\m{f}) = 1$ we get three rules
in $\UU(\RR)$ and can compute $\UNCURRYr{\RR}$.
The TRS~$\RR$ is
non-terminating but $\UNCURRYr{\RR} \cup \UU(\RR)$ is terminating.
\end{example}

\smallskip

Let $\RR$ be a left head variable free ATRS. The
\emph{$\eta$-saturated} ATRS $\RR_\eta$ is the smallest extension of
$\RR$ such that $\ell \circ x \to r \circ x \in \RR_\eta$ whenever
$\ell \to r \in \RR_\eta$ and $\aar(\ell) > 0$. Here $x$ is a variable
that does not appear in $\ell \to r$. In the following we write
$\UDE{\RR}$ for $\UNCURRYr{\RR_\eta} \cup \UU(\RR)$. 
Note that applicative arities are computed before $\eta$-saturation.

\begin{example}
Consider again Table~\ref{TAB:trafo}. Since $\aar(\m{id}) = 2$ but 
$\aar(\m{id} \jc x) = 1$ for the rule $\m{id} \jc x \to x$ in $\RR$
this explains the rule $\m{id} \jc x \jc y \to x \jc y$ in $\RR_\eta$.
Note that $\UDE{\RR}$ is non-terminating.
\end{example}

\smallskip

For a term $t$ over the signature of the TRS $\UDE{\RR}$, 
we denote by
$\CURRYp{t}$ the result of identifying different function symbols in
$\CURRY{t}$ that originate from the same function symbol in $\FF$. 
For a substitution $\sigma$, we write $\UNCURRY{\sigma}$ for the
substitution $\{ x \mapsto \UNCURRY{\sigma(x)} \mid \text{$x \in \VV$} \}$.

\begin{center}
\textbf{\textit{From now on we assume that every ATRS is left-head
variable free.}}
\end{center}

We conclude this preliminary section by recalling some results
from \cite{HMZ08}.

\smallskip

\begin{lemma}[\textnormal{\cite[Lemma~20]{HMZ08}}]
\label{uncurry substitution}
Let $\sigma$ be a substitution. If $t$ is head
variable free then $\UNCURRY{t}\UNCURRY{\sigma} = \UNCURRY{(t\sigma)}$.
\qed
\end{lemma}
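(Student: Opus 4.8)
The plan is to establish that $\UNCURRY{t}\UNCURRY{\sigma} = \UNCURRY{(t\sigma)}$ by induction on the structure of the head variable free term $t$. I would distinguish three cases according to the shape of $t$. The base case is when $t$ is a variable $x$; here $\UNCURRY{x} = x$, so $\UNCURRY{x}\UNCURRY{\sigma} = \UNCURRY{\sigma}(x) = \UNCURRY{\sigma(x)} = \UNCURRY{(x\sigma)}$ by the very definition of $\UNCURRY{\sigma}$. The second case is when $t$ is a constant $f$; then $t$ has no variables, so $t\sigma = t$ and $\UNCURRY{t}\UNCURRY{\sigma} = \UNCURRY{t} = \UNCURRY{(t\sigma)}$ trivially, since $\UNCURRY{t}$ is ground.

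The main work is in the inductive case $t = t_1 \circ t_2$. Because $t$ is head variable free, its head symbol is some constant $f$, and $t$ has the form $f \circ s_1 \circ \cdots \circ s_k$ for some $k \geqslant 1$ with $t_1 = f \circ s_1 \circ \cdots \circ s_{k-1}$ and $t_2 = s_k$; moreover every $s_j$ is again head variable free (being a proper subterm of a left-hand side), so the induction hypothesis applies to each. The normal form $\UNCURRY{t}$ is obtained by applying the uncurrying rules as far as possible. Writing $m = \aar(f)$, the key observation is that if $k \leqslant m$ the uncurrying rules collapse the spine into $f_k(\UNCURRY{s_1},\dots,\UNCURRY{s_k})$, and if $k > m$ they collapse the first $m$ applications into $f_m(\UNCURRY{s_1},\dots,\UNCURRY{s_m})$ and leave the remaining applications as $\circ$. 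In either case $\UNCURRY{t}$ is built from the subterms $\UNCURRY{s_j}$ by a fixed arrangement of $f_i$-symbols and residual $\circ$-symbols that depends only on $k$ and $m$, not on $\sigma$. The plan is to make this decomposition precise and then push $\UNCURRY{\sigma}$ inside: since instantiation commutes with the surrounding context of $f_i$'s and $\circ$'s, applying $\UNCURRY{\sigma}$ to $\UNCURRY{t}$ yields the same arrangement built from $\UNCURRY{s_j}\UNCURRY{\sigma}$, which by the induction hypothesis equals $\UNCURRY{(s_j\sigma)}$.

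It remains to observe that $\UNCURRY{(t\sigma)}$ produces exactly this same arrangement. The point is that $t\sigma = f \circ s_1\sigma \circ \cdots \circ s_k\sigma$ has the same head symbol $f$ and the same spine length $k$ as $t$, so uncurrying $t\sigma$ collapses its spine in precisely the same way, giving the arrangement of $f_i$'s and $\circ$'s built from $\UNCURRY{(s_j\sigma)}$. Comparing the two expressions then closes the induction. \textbf{The main obstacle} is the bookkeeping in the spine case: I must verify that substitution does not change the applicative arity used during uncurrying, i.e.\ that the number of $\circ$-symbols absorbed by $f_i$-symbols is governed solely by $\aar(f)$ and the syntactic spine length $k$, both of which are invariant under $\sigma$. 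The head variable freeness of $t$ is exactly what guarantees that the head is a constant so that this arity-based analysis is well-defined and that the induction hypothesis is available for every subterm.
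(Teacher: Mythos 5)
Your proof is correct and takes essentially the route of the original proof of this lemma in \cite{HMZ08} (the present paper only cites the result without reproving it): structural induction on $t$, decomposing the application spine as $t = f \circ s_1 \circ \cdots \circ s_k$ with constant head $f$ guaranteed by head variable freeness, and splitting on $k \leqslant \aar(f)$ versus $k > \aar(f)$ to see that $\UNCURRY{t}$ is a fixed arrangement of $f_i$- and $\circ$-symbols, invariant under $\sigma$, filled with the $\UNCURRY{s_j}$. One small repair to a justification: the $s_j$ are head variable free simply because head variable freeness is closed under taking subterms, not because they are ``proper subterms of a left-hand side'' --- the lemma concerns arbitrary head variable free terms, which is also exactly why your case $k > \aar(f)$ (impossible for subterms of rules, by the definition of $\aar$) is genuinely needed.
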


\smallskip

\begin{lemma}[\textnormal{\cite[Lemma~15]{HMZ08}}]
\label{R vs Reta}
If $\RR$ is an ATRS then ${\to_\RR} = {\to_{\RR_\eta}}$.
\qed
\end{lemma}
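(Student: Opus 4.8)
The easy inclusion ${\to_\RR} \subseteq {\to_{\RR_\eta}}$ is immediate, since $\RR \subseteq \RR_\eta$ holds by construction and hence every $\RR$-step is an $\RR_\eta$-step. All the content lies in the converse inclusion ${\to_{\RR_\eta}} \subseteq {\to_\RR}$, and the plan is to prove it by simulating each individual step that uses an $\eta$-expanded rule with a single step that uses the underlying original rule of $\RR$.

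First I would pin down the exact shape of the rules of $\RR_\eta$. By induction on the construction of $\RR_\eta$ as a least fixed point, every rule of $\RR_\eta$ is either a rule $\ell \to r$ of $\RR$ itself or has the form
\[
\ell \circ x_1 \circ \cdots \circ x_k \to r \circ x_1 \circ \cdots \circ x_k
\]
for some $\ell \to r \in \RR$ and fresh, pairwise distinct variables $x_1, \dots, x_k$ with $1 \leqslant k \leqslant \aar(\ell)$. The base case is trivial, and the step case appends one further fresh variable to both sides; here one uses $\aar(\ell \circ x_1 \circ \cdots \circ x_k) = \aar(\ell) - k$, which also explains why the saturation terminates at $k = \aar(\ell)$.

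Given this characterization, take a step $s \to_{\RR_\eta} t$. If the applied rule already belongs to $\RR$ there is nothing to show. Otherwise it has the displayed form, so $s = C[\ell\sigma \circ x_1\sigma \circ \cdots \circ x_k\sigma]$ and $t = C[r\sigma \circ x_1\sigma \circ \cdots \circ x_k\sigma]$ for some context $C$ and substitution $\sigma$. Because $\circ$ is left-associative, the occurrence of $\ell\sigma$ lies at the bottom of the left spine of the contracted redex, so I can absorb the arguments $x_1\sigma, \dots, x_k\sigma$ into an enlarged context $D = C[\HOLE \circ x_1\sigma \circ \cdots \circ x_k\sigma]$. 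Then $s = D[\ell\sigma]$ and $t = D[r\sigma]$, and contracting the original redex $\ell\sigma$ with $\ell \to r \in \RR$ at the hole of $D$ yields precisely $s \to_\RR t$.

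The only delicate point is the positional bookkeeping in this last step: one must verify that pushing the contraction $k$ positions down the left spine reproduces $t$ verbatim, which rests on the fresh variables $x_i$ occurring literally on both sides of the $\eta$-expanded rule (so that their $\sigma$-images are untouched by the contraction). I do not anticipate a genuine obstacle, however; the lemma is at heart the structural observation that $\eta$-saturation only ever adds rules whose rewriting effect is already realizable by contracting a deeper redex.
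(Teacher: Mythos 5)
Your proof is correct and takes essentially the intended route: the paper itself states this lemma without proof (it is imported from \cite[Lemma~15]{HMZ08}), and the argument there is precisely your simulation --- every rule of $\RR_\eta \setminus \RR$ is a $k$-fold $\eta$-expansion $\ell \circ x_1 \circ \cdots \circ x_k \to r \circ x_1 \circ \cdots \circ x_k$ of some $\ell \to r \in \RR$ with $1 \leqslant k \leqslant \aar(\ell)$ (by induction on the saturation, using $\aar(\ell \circ x_1 \circ \cdots \circ x_j) = \aar(\ell) - j$), and a step with such a rule is the same $\RR$-step performed in the enlarged context $C[\HOLE \circ x_1\sigma \circ \cdots \circ x_k\sigma]$. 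Your final remark on the freshness and distinctness of the $x_i$ (so that $x_i\sigma$ and $r\sigma$ are unaffected by contracting $\ell\sigma$ at the bottom of the left spine) is exactly the observation that makes the bookkeeping sound, so there is no gap.
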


\smallskip

\begin{lemma}[\textnormal{\cite[Lemmata~26 and~27]{HMZ08}}]
\label{curried step}
Let $\RR$ be an ATRS. If $s$ and $t$ are terms over the signature of
$\UDE{\RR}$
then (1) $s \to_{\UNCURRY{\RR}} t$ if and only if
$\CURRYp{s} \to_{\RR} \CURRYp{t}$ and (2) $s \to_\UU t$ implies
$\CURRYp{s} = \CURRYp{t}$.
\qed
\end{lemma}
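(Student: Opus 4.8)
The whole proof rests on a recursive description of $\CURRYp{\cdot}$, so I would establish that first. Because the currying system $\CC$ is confluent and terminating and its rules only split applications, $\CURRY{\cdot}$ satisfies $\CURRY{x}=x$, $\CURRY{s\circ t}=\CURRY{s}\circ\CURRY{t}$, and $\CURRY{f_i(t_1,\dots,t_i)}$ is the spine built from the curried constant of $f_i$ applied to $\CURRY{t_1},\dots,\CURRY{t_i}$. After identifying the symbols originating from the same $f\in\FF$ this reads $\CURRYp{f_i(t_1,\dots,t_i)}=f\circ\CURRYp{t_1}\circ\cdots\circ\CURRYp{t_i}$, together with $\CURRYp{x}=x$ and $\CURRYp{s\circ t}=\CURRYp{s}\circ\CURRYp{t}$. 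From this I would derive two facts by straightforward inductions: that $\CURRYp{\cdot}$ commutes with substitutions, $\CURRYp{u\sigma}=\CURRYp{u}\,\CURRYp{\sigma}$ with $\CURRYp{\sigma}=\{x\mapsto\CURRYp{\sigma(x)}\}$, and that it maps contexts to contexts, $\CURRYp{C[u]}=\widehat{C}[\CURRYp{u}]$ for a context $\widehat{C}$ depending only on $C$. Note that, unlike Lemma~\ref{uncurry substitution}, the substitution identity needs no head-variable-free hypothesis, since currying treats every application uniformly. These two identities are the engine behind both parts.

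Part~(2) then follows at once. The two sides of a rule $f_i(x_1,\dots,x_i)\circ y\to f_{i+1}(x_1,\dots,x_i,y)$ of $\UU$ have the same image $f\circ x_1\circ\cdots\circ x_i\circ y$ under $\CURRYp{\cdot}$. Hence for a step $s=C[\ell\sigma]\to_\UU C[r\sigma]=t$ the substitution identity gives $\CURRYp{\ell\sigma}=\CURRYp{r\sigma}$, and the context identity lifts this to $\CURRYp{s}=\widehat{C}[\CURRYp{\ell\sigma}]=\widehat{C}[\CURRYp{r\sigma}]=\CURRYp{t}$.

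For the forward direction of part~(1) I would first observe $\CURRYp{\UNCURRY{\ell}}=\ell$ for every left-hand side $\ell$ of $\RR$: from $\ell\to_\UU^*\UNCURRY{\ell}$ and part~(2) we get $\CURRYp{\UNCURRY{\ell}}=\CURRYp{\ell}$, and $\CURRYp{\ell}=\ell$ because $\ell$ is already applicative over $\FF$ and is therefore untouched by currying and by the identification; the same holds for right-hand sides. A step $s=C[\UNCURRY{\ell}\sigma]\to_{\UNCURRY{\RR}}C[\UNCURRY{r}\sigma]=t$ is then sent, via the two identities, to $\CURRYp{s}=\widehat{C}[\ell\,\CURRYp{\sigma}]$ and $\CURRYp{t}=\widehat{C}[r\,\CURRYp{\sigma}]$, and since $\ell\to r\in\RR$ and $\to_\RR$ is closed under contexts and substitutions this is exactly a step $\CURRYp{s}\to_\RR\CURRYp{t}$; Lemma~\ref{R vs Reta} makes it immaterial whether the rule is read from $\RR$ or $\RR_\eta$.

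The converse direction is the crux and the step I expect to be hardest. Given an $\RR$-step $\CURRYp{s}\to_\RR\CURRYp{t}$ that contracts a redex $\ell\tau$, one must produce a matching $\UNCURRY{\RR}$-redex inside $s$, but $\CURRYp{\cdot}$ re-curries every symbol $f_j$ of $s$ into a left-associative spine, so the contracted position could a priori lie strictly inside such a spine, where it has no counterpart in $s$; moreover $\tau$ must be pulled back from the applicative signature to that of $\UDE{\RR}$. My plan is to use part~(2) to reduce to the essential case in which $s$ is a $\UU$-normal form (it shares its image with $\UNCURRY{s}$), so that every constant $g$ is packed up to its applicative arity. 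Since $\RR$ is left-head-variable-free, the head of $\ell$ is such a constant $g$, occurring with $k\leqslant\aar(g)$ arguments, whence $\UNCURRY{\ell}=g_k(\dots)$. Analysing how the maximal $g$-spine of $\CURRYp{s}$ is cut by the packing present in the normal form shows that the prefix carved out by $\ell\tau$ ends exactly at a genuine, fully packed occurrence $g_k(\dots)$ of $s$; matching the arguments of $\ell$ against this occurrence and inverting the substitution identity yields the substitution over $\UDE{\RR}$ and hence the desired step. The delicate point throughout is precisely this alignment of the redex with the packing, and it is where left-head-variable-freeness and the bound $j\leqslant\aar(g)$ on the symbols occurring in $s$ are indispensable.
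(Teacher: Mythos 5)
Your preliminary work is sound and matches the standard route: the recursive description $\CURRYp{x}=x$, $\CURRYp{s\circ t}=\CURRYp{s}\circ\CURRYp{t}$, $\CURRYp{f_i(t_1,\dots,t_i)}=f\circ\CURRYp{t_1}\circ\cdots\circ\CURRYp{t_i}$, the substitution and context identities (including your correct observation that, unlike Lemma~\ref{uncurry substitution}, no head-variable-freeness is needed), part~(2), and the forward implication of part~(1) are all proved correctly, and this is essentially the argument behind the result cited from \cite{HMZ08} --- note that the paper itself gives no proof here but delegates it to that reference.

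The gap is in the converse implication of~(1), precisely the step you flagged as the crux: it cannot be completed, because the converse is false as literally stated. Take $\RR=\{\m{f}\circ x\to x,\ \m{a}\to\m{b}\}$, so $\aar(\m{f})=1$, $\RR_\eta=\RR$, $\UU(\RR)=\{\m{f}\circ x\to\m{f}_1(x)\}$ and $\UNCURRYr{\RR}=\{\m{f}_1(x)\to x,\ \m{a}\to\m{b}\}$. For $s=\m{f}\circ\m{b}$ and $t=\m{b}$, both terms over the signature of $\UDE{\RR}$, we have $\CURRYp{s}=\m{f}\circ\m{b}\to_\RR\m{b}=\CURRYp{t}$, yet $s$ contains no $\UNCURRYr{\RR}$-redex at all, so $s\to_{\UNCURRYr{\RR}}t$ fails. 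Your plan also contains a concretely unsound move: ``reducing to the case that $s$ is a $\UU$-normal form'' via part~(2) replaces $s$ by $\UNCURRY{s}$, which has the same curried image but is a \emph{different} term, whereas the conclusion $s\to_{\UNCURRYr{\RR}}t$ is about the fixed $s$ --- and about the fixed $t$: even from a $\UU$-normal $s$ one can only reach some $t'$ with $\CURRYp{t'}=\CURRYp{t}$, not $t$ itself. Indeed, with the same $\RR$, the $\UU$-normal term $s=\m{f}_1(\m{a})$ and $t=\m{f}\circ\m{b}$ satisfy $\CURRYp{s}\to_\RR\CURRYp{t}$, although $t$ is not even $\to^+_{\UDE{\RR}}$-reachable from $s$, since uncurrying steps are irreversible; so no reading of the converse with both $s$ and $t$ universally quantified survives. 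The defensible content of the lemma --- and the only content this paper ever uses, in Lemma~\ref{LEM:nfb} and Theorem~\ref{dc} --- is the forward implication of~(1) together with~(2); a correct converse-flavoured statement must quantify $t$ existentially modulo $\CURRYp{\cdot}$ (or modulo $\to^*_\UU$), in the spirit of Lemma~\ref{uncurried step}. Your instinct that this direction was the delicate one was right; the honest resolution is not a cleverer alignment argument but a counterexample plus a weakening of the statement.
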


\smallskip

\begin{lemma}[\textnormal{\cite[Proof of Theorem~16]{HMZ08}}]
\label{uncurried step}
Let $\RR$ be an ATRS. If $s \to_{\RR} t$ then
$\UNCURRY{s} \to^+_{\UDE{\RR}} \UNCURRY{t}$.
\qed
\end{lemma}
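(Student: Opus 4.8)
The plan is to prove the statement by induction on the structure of $s$, analysing where the contracted redex sits relative to the outermost application spine. Write $s$ in spine form $s = h \circ v_1 \circ \cdots \circ v_n$, where the head $h$ is atomic, i.e.\ a constant or a variable. Then the contracted redex either lies strictly inside one of the arguments $v_j$, or it occupies a prefix $h \circ v_1 \circ \cdots \circ v_i$ of the spine. In the first case $v_j \to_\RR v_j'$, and since $\UNCURRY{v_j}$ occurs as an argument subterm of $\UNCURRY{s}$ (in the appropriate position $f_i(\dots)$ or on the uncurried spine), while $\UNCURRY{t}$ is obtained from $\UNCURRY{s}$ by replacing that occurrence with $\UNCURRY{v_j'}$, the induction hypothesis $\UNCURRY{v_j} \to^+_{\UDE{\RR}} \UNCURRY{v_j'}$ together with closure of $\to_{\UDE{\RR}}$ under contexts yields the claim. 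This uses that uncurrying of the spine depends only on $\aar(h)$ and $n$, which are unaffected by contracting inside $v_j$.

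So the real work is the spine case, where $s = \ell\sigma \circ u_1 \circ \cdots \circ u_m$ and $t = r\sigma \circ u_1 \circ \cdots \circ u_m$ for a rule $\ell \to r \in \RR$ and $m \geqslant 0$. The extra arguments $u_1, \dots, u_m$ are exactly what forces the use of $\eta$-saturation, as in the motivating example: the plain uncurried rule $\UNCURRY{\ell} \to \UNCURRY{r}$ need not apply to $\UNCURRY{s}$ once $\ell\sigma$ has absorbed further arguments. I would set $k = \aar(\ell)$ and $j = \min(m,k)$ and use the $j$-fold $\eta$-expansion $\ell \circ x_1 \circ \cdots \circ x_j \to r \circ x_1 \circ \cdots \circ x_j$, which lies in $\RR_\eta$ because $\aar(\ell \circ x_1 \circ \cdots \circ x_i) = \aar(\ell) - i$ stays positive for $i < k$. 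Extending $\sigma$ by $x_i \mapsto u_i$ to a substitution $\sigma'$, the head variable free left-hand side lets Lemma~\ref{uncurry substitution} give $\UNCURRY{s} = \UNCURRY{(\ell \circ x_1 \circ \cdots \circ x_j)}\,\UNCURRY{\sigma'}$, possibly with the $m - k$ leftover arguments dangling on the uncurried spine. Applying the $\UNCURRYr{\RR_\eta}$-rule $\UNCURRY{(\ell \circ \cdots)} \to \UNCURRY{(r \circ \cdots)}$ under $\UNCURRY{\sigma'}$ and the surrounding context then performs one step towards $\UNCURRY{t}$.

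The step I expect to be the main obstacle is closing the gap between $\UNCURRY{(r \circ x_1 \circ \cdots \circ x_j)}\,\UNCURRY{\sigma'}$ and $\UNCURRY{t}$. Lemma~\ref{uncurry substitution} does not apply to the right-hand side, since $r$ (hence $r \circ x_1 \circ \cdots \circ x_j$ when $j = 0$) may have a variable head, so the two terms are in general unequal. The fix is to observe that uncurrying the instance can only proceed further, i.e.\ $\UNCURRY{(r \circ \cdots)}\,\UNCURRY{\sigma'} \to^*_{\UU} \UNCURRY{((r \circ \cdots)\sigma')}$, and that in the subcase $m > k$ the dangling arguments may trigger further $\UU(\RR)$-foldings into the now fully uncurried head. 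All of these are $\UU(\RR)$-steps, hence $\UDE{\RR}$-steps, and are harmlessly absorbed into the $\to^+$. The remaining effort is the careful arity bookkeeping justifying that exactly $j = \min(m, \aar(\ell))$ arguments are absorbed and that the residual applications are cleaned up by $\UU(\RR)$; the underlying structural induction and the appeals to Lemma~\ref{uncurry substitution} are routine.
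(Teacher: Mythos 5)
Your proposal is correct and takes essentially the same route as the cited proof from \cite{HMZ08} (which this paper mirrors in the proof of Lemma~\ref{LEM:commute}): argument steps are handled by structural induction and closure under contexts, while a redex with $m$ extra spine arguments is simulated by the $\min(m,\aar(\ell))$-fold $\eta$-expanded rule of $\RR_\eta$, Lemma~\ref{uncurry substitution} applied to the head variable free left-hand side, and trailing $\to^*_\UU$-steps justified by confluence of $\UU$. You also correctly isolated the two genuine subtleties---that Lemma~\ref{uncurry substitution} is unavailable for right-hand sides with head variables, and that leftover arguments beyond $\aar(\ell)$ may still be folded by $\UU(\RR)$---both of which the original proof likewise resolves via the uniqueness of $\UU$-normal forms.
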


Consequently our transformation is shown to be termination preserving and 
reflecting.

\begin{theorem}[\textnormal{\cite[Theorems~16 and~28]{HMZ08}}]
\label{main}
Let $\RR$ be an ATRS. 
The ATRS $\RR$ is terminating if and only if the TRS $\UDE{\RR}$ is
terminating.
\qed
\end{theorem}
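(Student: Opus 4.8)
The equivalence splits into two implications, and I would prove each by contraposition using the lemmas recalled above. For the \emph{if} direction (termination of $\UDE{\RR}$ implies termination of $\RR$, i.e.\ reflection), suppose $\RR$ is not terminating, so that there is an infinite sequence $t_0 \to_\RR t_1 \to_\RR t_2 \to_\RR \cdots$. Applying Lemma~\ref{uncurried step} to each step yields $\UNCURRY{t_0} \to^+_{\UDE{\RR}} \UNCURRY{t_1} \to^+_{\UDE{\RR}} \UNCURRY{t_2} \to^+_{\UDE{\RR}} \cdots$, which is an infinite $\UDE{\RR}$-rewrite sequence, so $\UDE{\RR}$ is not terminating either. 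This direction is routine once Lemma~\ref{uncurried step} is in hand, since every $\RR$-step contributes at least one $\UDE{\RR}$-step.

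For the \emph{only if} direction (termination of $\RR$ implies termination of $\UDE{\RR}$, i.e.\ preservation) I would again argue contrapositively, starting from an infinite $\UDE{\RR}$-sequence $s_0 \to s_1 \to s_2 \to \cdots$ in which each step is either a $\UNCURRYr{\RR_\eta}$-step or a $\UU(\RR)$-step. The plan is to project the whole sequence down with $\CURRYp{\cdot}$ and invoke Lemma~\ref{curried step}: by part~(2) every $\UU(\RR)$-step collapses to an equality $\CURRYp{s_i} = \CURRYp{s_{i+1}}$, while by part~(1) every $\UNCURRYr{\RR_\eta}$-step becomes a genuine step $\CURRYp{s_i} \to_{\RR_\eta} \CURRYp{s_{i+1}}$. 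Rewriting $\to_{\RR_\eta}$ as $\to_\RR$ via Lemma~\ref{R vs Reta}, the projected sequence is an $\RR$-rewrite sequence, and it is infinite precisely when infinitely many of the original steps are $\UNCURRYr{\RR_\eta}$-steps.

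The key step, and the place where I expect the real work to sit, is ensuring that the original $\UDE{\RR}$-sequence does contain infinitely many $\UNCURRYr{\RR_\eta}$-steps, so that the projection does not degenerate into a finite $\RR$-sequence padded out by equalities. Here termination of the uncurrying system $\UU = \UU(\RR)$ (recalled in the preliminaries) is decisive: were there only finitely many $\UNCURRYr{\RR_\eta}$-steps, the sequence would end in an infinite suffix of $\UU$-steps, contradicting well-foundedness of $\to_\UU$. The main technical obstacle is then the bookkeeping in applying Lemma~\ref{curried step} to $\RR_\eta$ rather than to $\RR$ itself; one must check that the lemma's hypotheses (the terms lie over the signature of $\UDE{\RR}$, and $\CURRYp{\cdot}$ correctly identifies symbols stemming from a common symbol of $\FF$) hold along the entire sequence, and that the applicative arities driving $\UU(\RR)$ are those computed \emph{before} $\eta$-saturation, so that $\UNCURRYr{\RR_\eta}$ really is the uncurried form of $\RR_\eta$ targeted by the lemma. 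With this verified, combining the two directions yields the claimed equivalence.
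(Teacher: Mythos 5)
Your proposal is correct and follows essentially the argument behind the cited result: the reflection direction via Lemma~\ref{uncurried step} (infinite $\RR$-sequences lift to infinite $\UDE{\RR}$-sequences), and the preservation direction by projecting with $\CURRYp{\cdot}$ using Lemma~\ref{curried step} together with Lemma~\ref{R vs Reta}, with termination of $\UU$ guaranteeing infinitely many $\UNCURRYr{\RR_\eta}$-steps survive the projection. The paper itself imports this theorem from \cite{HMZ08} without proof, but the lemmas it recalls are exactly the ingredients you assembled, including the correct attention to arities being computed before $\eta$-saturation.
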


\section{Innermost Uncurrying}
\label{INN:main}

Before showing that our transformation reflects innermost termination
we show that it does not preserve innermost termination. 
Hence uncurrying may not be used as a preprocessing transformation for 
innermost non-termination proofs.

\begin{example}
\label{EX:reflecti}
Consider the ATRS
$\RR$ consisting of the rules
\begin{xalignat*}{2}
\m{f} \jc x &\to \m{f} \jc x & \m{f} &\to \m{g}
\end{xalignat*}
In an innermost sequence the first rule is never applied and hence
$\RR$ is innermost terminating.
The TRS $\UDE{\RR}$ consists of the rules
\begin{xalignat*}{4}
\m{f}_1(x) &\to \m{f}_1(x) &
\m{f} &\to \m{g} &
\m{f}_1(x) &\to \m{g} \circ x &
\m{f} \circ x &\to \m{f}_1(x)
\end{xalignat*}
and is not innermost terminating due to the rule
$\m{f}_1(x) \to \m{f}_1(x)$.
\end{example}

\smallskip

The next example shows that $s \ito_\RR t$ does not imply
$s{\downarrow_\UU} \ito_{\UDE{\RR}}^+ t{\downarrow}_\UU$.
This is not a counterexample to soundness of uncurrying for innermost
termination, but it shows that the proof for the ``if-direction'' of
Theorem~\ref{main} (which is based on Lemma~\ref{uncurried step}) cannot
be adopted for the innermost case without further ado.

\smallskip

\begin{example}
\label{simulation - counterexample}
Consider the ATRS $\RR$ consisting of the rules
\begin{xalignat*}{3}
\m{f} &\to \m{g} &
\m{a} &\to \m{b} &
\m{g} \jc x &\to \m{h}
\end{xalignat*}
and the innermost step $s = \m{f} \jc \m{a} \ito_\RR \m{g} \jc \m{a} = t$.
We have $s{\downarrow_\UU} = \m{f} \circ \m{a}$ and
$t{\downarrow}_\UU = \m{g}_1(\m{a})$.
The TRS $\UDE{\RR}$ consists of the rules
\begin{xalignat*}{4}
\m{f} &\to \m{g} &
\m{a} &\to \m{b} &
\m{g}_1(x) &\to \m{h} &
\m{g} \circ x &\to \m{g}_1(x)
\end{xalignat*}
We have
$s{\downarrow_\UU} \ito_{\UDE{\RR}} \m{g} \circ \m{a}$ but the
step from $\m{g} \circ \m{a}$ to $t{\downarrow}_\UU$ is not innermost.
\end{example}

\smallskip

The above problems can be solved if we consider terms that are not
completely uncurried. The 
next lemmata
prepare for the proof.
Below we write $s \prsuperterm t$ if $t$ is a proper subterm of $s$.

\begin{lemma}
\label{LEM:nfb}
Let $\RR$ be an ATRS. If $s$ is a term over the signature of $\RR$,
$s \in \NF(\RR)$, and $s \to_\UU^* t$ then
$t \in \NF(\UNCURRY{\RR_\eta})$.
\end{lemma}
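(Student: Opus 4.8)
The plan is to prove the contrapositive: I show that any $\UNCURRY{\RR_\eta}$-redex in $t$ would force an $\RR$-redex in $s$, contradicting $s \in \NF(\RR)$. The bridge between $t$ and $s$ is the map $\CURRYp{\cdot}$. Since $s$ is built solely from constants and $\circ$, no currying rule is applicable to it and it contains no auxiliary symbol $f_i$ with $i > 0$; hence $\CURRYp{s} = s$. Every term on the reduction $s \to_\UU^* t$ lies over the signature of $\UDE{\RR}$, so iterating Lemma~\ref{curried step}(2) along this reduction gives $\CURRYp{t} = \CURRYp{s} = s$.

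Now suppose, towards a contradiction, that $t \to_{\UNCURRY{\RR_\eta}} t'$ for some $t'$. Both $t$ and $t'$ are terms over the signature of $\UDE{\RR}$ (an $\UNCURRY{\RR_\eta}$-step introduces only symbols $f_i$ already present there), so I may apply the ``only if'' direction of Lemma~\ref{curried step}(1) to the ATRS $\RR_\eta$, obtaining $\CURRYp{t} \to_{\RR_\eta} \CURRYp{t'}$, that is $s \to_{\RR_\eta} \CURRYp{t'}$. By Lemma~\ref{R vs Reta} we have $\to_\RR = \to_{\RR_\eta}$ and therefore $\NF(\RR) = \NF(\RR_\eta)$, so $s \to_{\RR_\eta} \CURRYp{t'}$ contradicts $s \in \NF(\RR)$. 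Hence no such $t'$ exists and $t \in \NF(\UNCURRY{\RR_\eta})$.

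Two points require justification before the argument is airtight. First, to invoke Lemma~\ref{curried step} for $\RR_\eta$ I must check that $\RR_\eta$ is still a left-head-variable-free ATRS (which follows because $\eta$-saturation only appends applications $\circ\,x$ at positions whose applicative arity is positive) and, crucially, that the currying and uncurrying systems attached to $\RR_\eta$ coincide with those of $\RR$; this is precisely the convention that applicative arities are computed before $\eta$-saturation, so $\UU(\RR_\eta) = \UU(\RR)$ and $\CURRYp{\cdot}$ is the same operation in both settings. Second, the identity $\CURRYp{t} = s$ hinges on $\CURRYp{s} = s$ and on the $\to_\UU$-invariance of $\CURRYp{\cdot}$; I expect establishing this identity to be the main (though still routine) obstacle, whereas the rest is a short contraposition assembled from the cited lemmata.
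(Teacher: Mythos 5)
Your proof is correct and follows essentially the same route as the paper's: both establish $\CURRYp{t} = \CURRYp{s} = s$ via Lemma~\ref{curried step}(2) along the $\to_\UU^*$-sequence and then derive a contradiction from a hypothetical step $t \to_{\UNCURRY{\RR_\eta}} u$ using Lemma~\ref{curried step}(1) instantiated with $\RR_\eta$ together with Lemma~\ref{R vs Reta}. Your additional checks---that $\RR_\eta$ remains left head variable free and that its attached (un)currying system is $\UU(\RR)$ by the convention that applicative arities are computed before $\eta$-saturation---are sound and merely make explicit what the paper leaves implicit.
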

\proof
From Lemma~\ref{curried step}(2) we obtain $\CURRYp{s} = \CURRYp{t}$. Note
that $\CURRYp{s} = s$ because $s$ is a term over the signature of $\RR$.
If $t \notin \NF(\UNCURRY{\RR_\eta})$ then $t \to_{\UNCURRY{\RR_\eta}} u$
for some term $u$. Lemma~\ref{curried step}(1) yields
$\CURRYp{t} \to_{\RR_\eta} \CURRYp{u}$ and Lemma~\ref{R vs Reta} yields
$s \to_\RR \CURRYp{u}$. Hence $s \notin \NF(\RR)$, contradicting the
assumption.
The proof is summarized in the following diagram:
\[
\begin{tikzpicture}
\node (u1)                   {$s$};
\node (u2) [right=of u1,xshift=5em]     {$t$};
\node (u3) [right=of u2,xshift=5em]     {$u$};
\node (d1) [below=of u1]     {$\CURRYp{s}$};
\node (d2) [below=of u2]     {$\CURRYp{t}$};
\node (d3) [below=of u3]     {$\CURRYp{u}$};
\node (dx) [below=of u3,yshift=-1em] {};
\draw[tos] (u1) \toAB{*}{\UU} (u2);
\draw[to]  (u2) \toAB{}{\UNCURRY{\RR_\eta}} (u3);
\draw[eq]  (d1) to node[above,color=black]
 {\scriptsize Lemma~\ref{curried step}(2)} 
                   node[color=black] {=} (d2);
\draw[to]  (d2) \toAB{}{\RR_\eta} node[above,color=black]
 {\scriptsize Lemma~\ref{curried step}(1)} (d3);
\draw[tob] (d2) \tobAB{}{\RR} node[below]
 {\scriptsize Lemma~\ref{R vs Reta}} (dx);
\draw[eq]  (u1)to node[color=black] {=} (d1);
\draw[tos] (u2) \doAB{*}{\CC'} (d2);
\draw[tos] (u3) \doAB{*}{\CC'} (d3);
\end{tikzpicture}
\]
\qed

\begin{lemma}
\label{LEM:commute0}
$\to^*_\UU \cdot \prsuperterm {\subseteq} \prsuperterm \cdot \to^*_\UU$
\end{lemma}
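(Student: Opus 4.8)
The plan is to reduce the statement to its single-step version $\to_\UU \cdot \prsuperterm \subseteq \prsuperterm \cdot \to^*_\UU$ and then lift it to arbitrary $\UU$-sequences by induction on the number of steps. For the single-step claim, suppose $s \to_\UU u$ by contracting a redex $f_i(s_1,\dots,s_i) \circ s_{i+1}$ at a position $p$, so that $u$ carries the contractum $f_{i+1}(s_1,\dots,s_i,s_{i+1})$ at $p$, and suppose $t = u|_q$ for some $q \neq \epsilon$ (this is exactly what $u \prsuperterm t$ means). I must exhibit a proper subterm $v$ of $s$ with $v \to^*_\UU t$, and the argument proceeds by a case analysis on how $q$ relates to $p$.

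If $q$ is parallel to $p$ (incomparable), then the step does not touch position $q$, so $t = u|_q = s|_q$; since $q \neq \epsilon$ I take $v := s|_q = t$, a proper subterm of $s$, and use zero $\UU$-steps. If $q \leqslant p$ with $q \neq \epsilon$, then the whole step takes place inside $u|_q$, so $s|_q \to_\UU u|_q = t$; here $v := s|_q$ is a proper subterm of $s$ and reaches $t$ in one step (this subsumes the case $q = p$, where $v$ is the redex and $t$ the contractum). Finally, if $q > p$, write $q = pr$ with $r \neq \epsilon$; then $t$ is a subterm of the contractum $f_{i+1}(s_1,\dots,s_{i+1})$ strictly below its root, hence a subterm of some argument $s_j$ with $1 \leqslant j \leqslant i+1$. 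Each such $s_j$ already occurs in $s$ below $p$, namely at position $p1j$ for $j \leqslant i$ and at position $p2$ for $j = i+1$, so $t = s|_{p'}$ for some $p' \neq \epsilon$; I take $v := t = s|_{p'}$ and again use zero steps. In every case $v$ is a proper subterm of $s$ with $v \to^*_\UU t$, which establishes the single-step claim (in fact using at most one step on the right).

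For the full statement I would then induct on the length of $s \to^*_\UU u$. The base case $s = u \prsuperterm t$ is immediate. For the inductive step I split the sequence as $s \to_\UU s' \to^*_\UU u \prsuperterm t$; the induction hypothesis applied to the tail gives $s' \prsuperterm v'$ with $v' \to^*_\UU t$, and the single-step lemma applied to $s \to_\UU s' \prsuperterm v'$ gives $s \prsuperterm w$ with $w \to^*_\UU v'$, whence $s \prsuperterm w \to^*_\UU v' \to^*_\UU t$ as desired. The only genuinely delicate point is the case $q > p$ of the single-step analysis: one must observe that the arguments $s_1,\dots,s_{i+1}$ of the contractum are literally the subterms of $s$ sitting immediately under the redex (to the left and right of the contracted $\circ$), so that any subterm of $u$ lying strictly inside the contractum is already present unchanged in $s$. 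Once this positional bookkeeping is carried out, the remaining cases and the induction are routine.
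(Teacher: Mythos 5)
Your proof is correct, but it takes a genuinely different route from the one in the paper. You factor the claim into a single-step commutation ${\to_\UU} \cdot {\prsuperterm} \subseteq {\prsuperterm} \cdot {\to^*_\UU}$, proved by case analysis on how the subterm position $q$ relates to the redex position $p$ (parallel, above or equal, strictly below), and then tile these local diagrams by induction on the length of the $\UU$-sequence; the delicate case $q > p$ works precisely because the uncurrying rules $f_i(x_1,\dots,x_i) \circ y \to f_{i+1}(x_1,\dots,x_i,y)$ are left-linear and non-erasing, so the arguments $s_1,\dots,s_{i+1}$ of the contractum occur verbatim in $s$ at positions $p1j$ and $p2$, as you observe. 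The paper instead proceeds by structural induction on the source term $s$, treating the whole multi-step sequence at once: writing $s = s_1 \circ s_2$, it distinguishes whether the outermost $\circ$ is uncurried along the sequence; if not, the reduction splits componentwise and the induction hypothesis applies to the component containing the subterm, and if so, the reduct has the shape $f_{i+1}(t_1',\dots,t_i',s_2')$ whose arguments are $\UU$-reducts of subterms of $s$, and the same reasoning closes the case. Your decomposition is the standard strip-lemma technique: it is more modular, in fact yields the slightly stronger local statement that each diagram closes with at most one $\UU$-step on the right, and applies verbatim to terms over the mixed signature containing both $\circ$ and the symbols $f_i$, whereas the paper's spine-based phrasing is tailored to applicative terms. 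What the paper's argument buys in exchange is that it avoids any single-step positional bookkeeping and its case split (outermost $\circ$ uncurried or not) mirrors the analysis reused later in the proof of Lemma~\ref{LEM:commute}, keeping the two proofs stylistically aligned. Both arguments are sound, and your induction-on-length assembly of the local diagrams is carried out correctly.
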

\proof
Assume $s \to^*_\UU t \prsuperterm u$. We show that
$s \prsuperterm \cdot \to^*_\UU u$ by induction on $s$. If $s$ is a
variable or a constant then there is nothing to show. So let
$s = s_1 \circ s_2$. We consider two cases.
\begin{itemize}
\item
If the outermost $\circ$ has not been uncurried then $t = t_1 \circ t_2$
with $s_1 \to^*_\UU t_1$ and $s_2 \to^*_\UU t_2$. Without loss of
generality assume that $t_1 \superterm u$. If $t_1 = u$ then
$s \prsuperterm s_1 \to^*_\UU t_1$. If $t_1 \prsuperterm u$ then the
induction hypothesis yields $s_1 \prsuperterm \cdot \to^*_\UU u$ and
hence also $s \prsuperterm \cdot \to^*_\UU u$.
\item
If the outermost $\circ$ has been uncurried in the sequence from $s$ to
$t$ then the head symbol of $s_1$ cannot be a variable and
$\aar(s_1) > 0$. Hence we may write
$s_1 = f \circ t_1 \circ \dots \circ t_i$ and 
$t = f_{i+1}(t_1',\dots,t_i',s_2')$ with $t_j \to^*_\UU t_j'$ for all
$1 \leqslant j \leqslant i$ and $s_2 \to^*_\UU  s_2'$. Clearly,
$t_j' \superterm u$ for some $1 \leqslant j \leqslant i$ or
$s_2' \superterm t$. In all cases the result follows with the same
reasoning as in the first case.
\qed
\end{itemize}

\smallskip

The next lemma states (a slightly more general result than) that an
innermost root rewrite step in an ATRS $\RR$ can be simulated by an
innermost rewrite sequence in $\UDE{\RR}$.

\smallskip

\begin{lemma}
\label{LEM:ito}
For every ATRS $\RR$ the inclusion
${\fromTB{*}{\UU} \cdot \ieto_\RR} \subseteq
{\ito^+_{\UDE{\RR}} \cdot \fromTB{*}{\UU}}$ holds.
\end{lemma}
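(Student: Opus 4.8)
The plan is to unfold the composite relation on the left: assume $y \to^*_\UU x$ and $y \ieto_\RR z$, and produce a term $w$ with $x \ito^+_{\UDE{\RR}} w$ and $z \to^*_\UU w$. Since $y \ieto_\RR z$ is an innermost root step, fix a rule $\ell \to r \in \RR$ and a substitution $\sigma$ with $y = \ell\sigma$, $z = r\sigma$, and every proper subterm of $\ell\sigma$ in $\NF(\RR)$. Because $\RR$ is left head variable free, $\ell = f \circ a_1 \circ \cdots \circ a_n$ for a constant $f$ with $\aar(f) \geqslant n$, so $\UNCURRY{\ell} = f_n(\UNCURRY{a_1},\dots,\UNCURRY{a_n})$ and $\UNCURRY{\ell} \to \UNCURRY{r}$ is a rule of $\UNCURRY{\RR} \subseteq \UNCURRY{\RR_\eta} \subseteq \UDE{\RR}$. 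As $\UU$ is confluent and terminating, $y \to^*_\UU x$ forces $\UNCURRY{x} = \UNCURRY{y}$, hence $x \to^*_\UU \UNCURRY{y}$.

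The heart of the argument is to realise $x \to^*_\UU \UNCURRY{y}$ by an innermost $\UDE{\RR}$-reduction. By Lemma~\ref{LEM:commute0}, every proper subterm of $x$ (and of every term on the way to $\UNCURRY{y}$) is a $\to^*_\UU$-reduct of a proper subterm of $\ell\sigma$; each such subterm lies in $\NF(\RR)$, so Lemma~\ref{LEM:nfb} shows its reducts are $\UNCURRY{\RR_\eta}$-normal. Thus the only $\UDE{\RR}$-redexes strictly below the root are $\UU$-redexes. Moreover no root $\UNCURRY{\RR_\eta}$-redex can arise before the spine is fully uncurried: each proper prefix $f \circ a_1\sigma \circ \cdots \circ a_j\sigma$ with $j < n$ is a proper subterm of $\ell\sigma$, hence an $\RR$-normal form, so by Lemma~\ref{LEM:nfb} its partial uncurryings $f_j(\dots)$ are $\UNCURRY{\RR_\eta}$-normal. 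Contracting $\UU$-redexes innermost-first (bottom-up) therefore realises $x \to^*_\UU \UNCURRY{y}$ by steps that are innermost for $\UDE{\RR}$, giving $x \ito^*_{\UDE{\RR}} \UNCURRY{y}$.

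At $\UNCURRY{y} = \UNCURRY{(\ell\sigma)} = \UNCURRY{\ell}\,\UNCURRY{\sigma}$ (Lemma~\ref{uncurry substitution}, since $\ell$ is head variable free) the root is a redex of $\UNCURRY{\ell} \to \UNCURRY{r}$ whose proper subterms are all $\UDE{\RR}$-normal, so $\UNCURRY{y} \ito_{\UNCURRY{\RR_\eta}} \UNCURRY{r}\,\UNCURRY{\sigma}$ is innermost. I would then set $w = \UNCURRY{r}\,\UNCURRY{\sigma}$, which gives $x \ito^+_{\UDE{\RR}} w$ (at least this last step is always present). It remains to verify $z \to^*_\UU w$: a routine induction on $r$ shows $r\sigma \to^*_\UU \UNCURRY{r}\,\UNCURRY{\sigma}$, the reduct obtained by uncurrying inside the bindings and along the original spine of $r$ while leaving the application spines newly created at head-variable positions untouched. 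Hence $z = r\sigma \to^*_\UU w$ and the inclusion follows.

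The main obstacle is the bookkeeping of the second paragraph: arguing rigorously that along $x \to^*_\UU \UNCURRY{y}$ every redex that could threaten innermost-ness is a $\UU$-redex, and that the decisive root $\UNCURRY{\RR_\eta}$-redex materialises precisely when the spine has been combined to $f_n(\dots)$. This is exactly where Lemma~\ref{LEM:nfb} (partial uncurryings of $\RR$-normal forms are $\UNCURRY{\RR_\eta}$-normal) and Lemma~\ref{LEM:commute0} (subterms of $\UU$-reducts descend from subterms) do the work, while the commutation $r\sigma \to^*_\UU \UNCURRY{r}\,\UNCURRY{\sigma}$ is comparatively mechanical. Note that for a \emph{root} step the applicative arity on the spine equals $n$, so beyond the inclusion $\UNCURRY{\RR} \subseteq \UNCURRY{\RR_\eta}$ the extra $\eta$-rules of $\RR_\eta$ are not actually invoked here.
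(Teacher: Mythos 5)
Your proposal is correct and takes essentially the same route as the paper's proof: you factor the left-hand side as an innermost (bottom-up) uncurrying $x \ito_\UU^* \UNCURRY{(\ell\sigma)} = \UNCURRY{\ell}\UNCURRY{\sigma}$, an innermost root step with $\UNCURRY{\ell} \to \UNCURRY{r} \in \UNCURRY{\RR_\eta}$, and the closing reduction $r\sigma \to_\UU^* \UNCURRY{r}\UNCURRY{\sigma}$, securing innermost-ness exactly as the paper does via Lemmas~\ref{LEM:commute0} and~\ref{LEM:nfb} together with Lemma~\ref{uncurry substitution} and confluence of $\UU$. Your paragraph ruling out early root $\UNCURRY{\RR_\eta}$-redexes along the spine is superfluous, since innermost-ness only constrains the proper subterms of the contracted redex, but it is harmless.
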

\begin{proof}
We prove that $s \ito^+_{\UDE{\RR}} \UNCURRY{r}\UNCURRY{\sigma}
\fromTB{*}{\UU} r\sigma$ whenever
$s \fromTB{*}{\UU} \ell\sigma \ieto_\RR r\sigma$ for some rewrite rule
$\ell \to r$ in $\RR$. By Lemma~\ref{uncurry substitution} and the
confluence of $\UU$,
\[
s \ito_\UU^* \UNCURRY{(\ell\sigma)} = \UNCURRY{\ell}\UNCURRY{\sigma}
\to_{\UDE{\RR}} \UNCURRY{r}\UNCURRY{\sigma} \fromTB{*}{\UU} r\sigma
\]
It remains to show that the sequence $s \ito_\UU^* \UNCURRY{(\ell\sigma)}$
and the step $\UNCURRY{\ell}\UNCURRY{\sigma} \to_{\UDE{\RR}}
\UNCURRY{r}\UNCURRY{\sigma}$ are innermost with respect to
$\UDE{\RR}$.
For the former, let
$s \ito_\UU^* C[u] \ito_\UU C[u'] \ito_\UU^* \UNCURRY{(\ell\sigma)}$
with $u \ieto_\UU u'$ and let $t$ be a proper subterm of $u$. Obviously
$\ell\sigma \to^*_\UU C[u] \prsuperterm t$. According to
Lemma~\ref{LEM:commute0}, $\ell\sigma \prsuperterm v \to^*_\UU t$ for some
term $v$. Since $\ell\sigma \ito_\RR^\epsilon r\sigma$,
the term $v$ is a normal
form of $\RR$. Hence $t \in \NF(\UNCURRY{\RR_\eta})$ by
Lemma~\ref{LEM:nfb}. Since $u \ieto_\UU u'$, $t$ is also a normal form of
$\UU$. Hence $t \in \NF(\UDE{\RR})$ as desired.
For the latter, let $t$ be a proper subterm of $\UNCURRY{(\ell\sigma)}$.
According to Lemma~\ref{LEM:commute0},
$\ell\sigma \prsuperterm u \to^*_\UU t$. The term $u$ is a normal form of
$\RR$. Hence $t \in \NF(\UNCURRY{\RR_\eta})$ by Lemma~\ref{LEM:nfb}.
Obviously, $t \in \NF(\UU)$ and thus also $t \in \NF(\UDE{\RR})$.
\end{proof}

The next example shows that 
it is not sound to replace
$\ito^\epsilon_\RR$ by $\ito_\RR$
in Lemma~\ref{LEM:ito}.

\smallskip

\begin{example}
Consider the ATRS $\RR$ consisting of the rules
\begin{xalignat*}{3}
\m{f} &\to \m{g} &
\m{f} \jc x &\to \m{g} \jc x &
\m{a} &\to \m{b}
\end{xalignat*}
Consequently the TRS $\UDE{\RR}$ consists of the rules
\begin{xalignat*}{5}
\m{f} &\to \m{g} &
\m{f}_1(x) &\to \m{g}_1(x) &
\m{a} &\to \m{b}&
\m{f} \circ x &\to \m{f}_1(x)&
\m{g} \circ x &\to \m{g}_1(x)
\end{xalignat*}
We have
$\m{f}_1(\m{a}) \fromTB{*}{\UU} \m{f} \circ \m{a} \ito_\RR
\m{g} \circ \m{a}$ but
$\m{f}_1(\m{a}) \ito^+_{\UDE{\RR}} \cdot \fromTB{*}{\UU}
\m{g} \circ \m{a}$ does not hold. To see that the latter does not hold,
consider the two reducts of $\m{g} \circ \m{a}$ with respect to
$\to_\UU^*$: $\m{g}_1(\m{a})$ and $\m{g} \circ \m{a}$. We have
neither
$\m{f}_1(\m{a}) \ito^+_{\UDE{\RR}} \m{g}_1(\m{a})$ nor
$\m{f}_1(\m{a}) \ito^+_{\UDE{\RR}} \m{g} \circ \m{a}$.
\end{example}

In order to extend Lemma~\ref{LEM:ito} to non-root positions, we
have to use rightmost innermost evaluation.
This avoids the situation in the above example
where parallel redexes become nested by uncurrying.

\smallskip

\begin{lemma}
\label{LEM:commute}
For every ATRS $\RR$ the inclusion
$\fromTB{*}{\UU} \cdot \rito_\RR {\subseteq} \ito^+_{\UDE{\RR}} \cdot
\fromTB{*}{\UU}$ holds.
\end{lemma}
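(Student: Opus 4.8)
The plan is to prove, by induction on the structure of $s$, that $t \to_\UU^* s$ and $t \rito_\RR t'$ imply $s \ito^+_{\UDE{\RR}} s' \fromTB{*}{\UU} t'$ for some $s'$; this is exactly the claimed inclusion. Write $p$ for the position of the redex contracted in $t \rito_\RR t'$. Since a rightmost innermost redex is in particular innermost, $t|_p \ieto_\RR t'|_p$ is an innermost root step of the contracted subterm. If $p = \epsilon$ the whole step is an innermost root step $t \ieto_\RR t'$, and Lemma~\ref{LEM:ito} closes this case at once.

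Otherwise $t = t_1 \circ t_2$, and I first ask whether the outermost $\circ$ is uncurried in the sequence $t \to_\UU^* s$. If it is not, then $s = s_1 \circ s_2$ with $t_1 \to_\UU^* s_1$ and $t_2 \to_\UU^* s_2$, and rightmost innermost evaluation locates the redex: as every position inside $t_2$ is to the right of every position inside $t_1$, the redex lies in $t_2$ with $t_2 \rito_\RR t_2'$ when $t_2 \notin \NF(\RR)$, and in $t_1$ with $t_1 \rito_\RR t_1'$ when $t_2 \in \NF(\RR)$. Either way I apply the induction hypothesis to the corresponding proper subterm ($s_2$ or $s_1$) and replay the obtained $\UDE{\RR}$-derivation under the context $s_1 \circ \HOLE$ respectively $\HOLE \circ s_2$. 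These steps remain innermost because embedding into a context leaves the proper subterms of each contracted redex unchanged, and $t'$ reaches the new term by $\to_\UU^*$ since the outer application is left intact.

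The substantial case is when the outermost $\circ$ is uncurried. Then the head of $t$ is a constant, say $t = f \circ u_1 \circ \cdots \circ u_n$, and $s = f_n(\hat u_1,\dots,\hat u_n)$ with $u_j \to_\UU^* \hat u_j$; uncurrying the topmost application moreover forces $n \leqslant \aar(f)$. If $p$ points inside some argument $u_j$, then $u_j \rito_\RR u_j'$ and I apply the induction hypothesis to the proper subterm $\hat u_j$, lifting the result under $f_n(\hat u_1,\dots,\HOLE,\dots,\hat u_n)$ as before. The alternative is that $p$ lies on the spine, so $t|_p = g = f \circ u_1 \circ \cdots \circ u_m$ with $m < n$. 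This is where rightmost innermost evaluation is essential: $g$ being innermost makes $u_1,\dots,u_m$ normal forms, and $g$ being rightmost makes $u_{m+1},\dots,u_n$ normal forms, so every argument of $t$ belongs to $\NF(\RR)$; this is precisely the configuration that fails in the example preceding the lemma, where a non-rightmost step buries a live redex under the uncurried head.

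For this spine case no subterm of $s$ corresponds to $g$ — it has been absorbed into the symbol $f_n$ — so the simulating step must be taken at the root of $s$, and I argue as in the proof of Lemma~\ref{LEM:ito}. By Lemma~\ref{LEM:nfb} each $\hat u_j$ lies in $\NF(\UNCURRY{\RR_\eta})$, and this persists under innermost $\UU$-normalisation, so $s \ito^*_{\UDE{\RR}} f_n(\UNCURRY{u_1},\dots,\UNCURRY{u_n})$ with all arguments now in $\NF(\UU) \cap \NF(\UNCURRY{\RR_\eta}) = \NF(\UDE{\RR})$. Writing $\ell \to r$ for the rule with $\ell\tau = g$, the bound $n \leqslant \aar(f)$ guarantees that $\RR_\eta$ contains $\ell \circ y_1 \circ \cdots \circ y_{n-m} \to r \circ y_1 \circ \cdots \circ y_{n-m}$, whose uncurried image is an $f_n$-headed rule of $\UNCURRY{\RR_\eta} \subseteq \UDE{\RR}$; applying it at the root is innermost exactly because the arguments are $\UDE{\RR}$-normal, and it yields $\UNCURRY{t'}$, to which $t'$ reduces by $\to_\UU^*$. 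I expect this spine case to be the main obstacle, being the only place where the contracted redex has no counterpart inside $s$ and where the interplay of $\eta$-saturation with the arity bound $n \leqslant \aar(f)$ has to be verified.
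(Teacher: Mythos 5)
Your proof is correct and takes essentially the same route as the paper: the same base case via Lemma~\ref{LEM:ito}, the same use of rightmost-innermostness to make all spine arguments $\RR$-normal, Lemma~\ref{LEM:nfb} plus innermost $\UU$-normalisation for the arguments, and the $\eta$-saturated rule justified by the arity bound; the only difference is bookkeeping, since you peel one application at a time (so your uncurried case is always a fully absorbed $f_n(\hat u_1,\dots,\hat u_n)$) whereas the paper inducts on the context $C$ and handles partial absorption $f_i(u_1',\dots,u_k',s_1',\dots,s_{i-k}')\circ s_{i-k+1}'\circ\cdots\circ s_n'$ in one case. One harmless imprecision: when $r$ has a head variable the root step yields $\UNCURRY{(r \circ y_1 \circ \cdots \circ y_{n-m})}\UNCURRY{\tau}$, which need not equal $\UNCURRY{t'}$, but it is still a $\to_\UU^*$-reduct of $t'$ (as the paper's use of $\to_\UU^*$ rather than equality reflects), which is all your argument actually needs.
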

\begin{proof}
\renewcommand{\qedsymbol}{}
Let $s \fromTB{*}{\UU} t = C[\ell\sigma] \rito_\RR C[r\sigma] = u$ with
$\ell\sigma \ieto_\RR r\sigma$. We use induction on $C$. If $C = \Box$
then $s \fromTB{*}{\UU} t \ieto_\RR u$. Lemma~\ref{LEM:ito} yields
$s \ito^+_{\UDE{\RR}} \cdot \fromTB{*}{\UU} u$. For the induction
step we consider two cases.
\begin{itemize}
\item
Suppose $C = \Box \circ s_1 \circ \dots \circ s_n$ and $n > 0$.
Since $\RR$ is left head variable free, $\aar(\ell)$ is defined. If
$\aar(\ell) = 0$ then
\(
s = t' \circ s_1' \circ \dots \circ s_n'
\fromTB{*}{\UU} \ell\sigma \circ s_1 \circ \dots \circ s_n
\ito_\RR r\sigma \circ s_1 \circ \dots \circ s_n
\)
with $t' \fromTB{*}{\UU} \ell\sigma$ and
$s_j' \fromTB{*}{\UU} s_j$ for $1 \leqslant j \leqslant n$.
The claim follows using Lemma~\ref{LEM:ito} and the fact that innermost
rewriting is closed under contexts.
If $\aar(\ell) > 0$ we
have to consider two cases. In the case where the leftmost $\circ$ 
symbol in $C$ has not been uncurried we proceed as when $\aar(\ell) = 0$.
If the leftmost $\circ$ symbol of $C$ has been uncurried, we
reason as follows. We may write
$\ell\sigma = f \circ u_1 \circ \dots \circ u_k$ where $k < \aar(f)$. We
have
$t = f \circ u_1 \circ \dots \circ u_k \circ s_1 \circ \dots \circ s_n$
and $u = r\sigma \circ s_1 \circ \dots \circ s_n$.
There exists an $i$ with $1 \leqslant i \leqslant \min \{ \aar(f), k+n \}$
such that 
\(
s = f_i(u_1',\dots,u_k',s_1',\dots,s_{i-k}') \circ s_{i-k+1}' \circ \dots
\circ s_n'
\)
with $u_j' \fromTB{*}{\UU} u_j$ for $1 \leqslant j \leqslant k$ and
$s_j' \fromTB{*}{\UU} s_j$ for $1 \leqslant j \leqslant n$.
Because of rightmost innermost rewriting, the terms
$\seq[k]{u}, \seq{s}$ are normal forms of $\RR$. According to
Lemma~\ref{LEM:nfb} the terms $u_1', \dots, u_k', s_1', \dots, s_n'$
are normal forms of $\UNCURRY{\RR_\eta}$.
Since $i-k \leqslant \aar(\ell)$, $\RR_\eta$ contains the rule
$\ell \circ x_1 \circ \dots \circ x_{i-k} \to r \circ x_1 \circ \dots
\circ x_{i-k}$ where $\seq[i-k]{x}$ are pairwise distinct variables not
occurring in $\ell$. Therefore
$\tau = \sigma \cup \{ x_1 \mapsto s_1, \dots, x_{i-k} \mapsto s_{i-k} \}$
is a well-defined substitution. We obtain
\begin{eqnarray*}
s &\ito^*_{\UDE{\RR}}&
f_i(\UNCURRY{u_1},\dots,\UNCURRY{u_k},\UNCURRY{s_1},\dots,
\UNCURRY{s_{i-k}}) \circ s_{i-k+1}' \circ \cdots \circ s_n' \\
&\ito_{\UDE{\RR}}&
\UNCURRY{(r \circ x_1 \circ \cdots \circ x_{i-k})}\UNCURRY{\tau} \circ
s_{i-k+1}' \circ \cdots \circ s_n' \\
&\fromTB{*}{\UU}&
(r \circ x_1 \circ \cdots \circ x_{i-k})\tau \circ
s_{i-k+1} \circ \cdots \circ s_n ~=~
r\sigma \circ s_1 \circ \cdots \circ s_n ~=~ t
\end{eqnarray*}
where we use the confluence of $\UU$ in the first sequence.
\item
In the second case we have $C = s_1 \circ C'$. Clearly
$C'[\ell\sigma] \rito_\RR C'[r\sigma]$.
If $\aar(s_1) \leqslant 0$ or if $\aar(s_1)$ is undefined or if
$\aar(s_1) > 0$ and the outermost $\circ$ has not been uncurried in the
sequence from $t$ to $s$ then
\(
s = s_1' \circ s' \fromTB{*}{\UU} s_1 \circ C'[\ell\sigma] \rito_\RR
s_1 \circ C'[r\sigma] = u
\)
with $s_1' \fromTB{*}{\UU} s_1$ and $s' \fromTB{*}{\UU} C'[\ell\sigma]$.
If $\aar(s_1) > 0$ and the outermost $\circ$ has been uncurried in the
sequence from $t$ to $s$ then we may write
$s_1 = f \circ u_1 \circ \dots \circ u_k$ where $k < \aar(f)$.
We have $s = f_{k+1}(\seq[k]{u'},s')$ for some term $s'$ with
$s' \fromTB{*}{\UU} C'[\ell\sigma]$
and $u_i' \fromTB{*}{\UU} u_i$ for $1 \leqslant i \leqslant k$.
In both cases we obtain
$s' \ito^+_{\UDE{\RR}} \cdot \fromTB{*}{\UU} C'[r\sigma]$
from the induction hypothesis.
Since innermost rewriting is closed under contexts, the desired
$s \ito^+_{\UDE{\RR}} \cdot \fromTB{*}{\UU} u$
follows.
\QED
\end{itemize}
\end{proof}

By Lemma~\ref{LEM:commute} and the equivalence of rightmost innermost
and innermost termination \cite{K00} we obtain the main result of this
section.

\smallskip

\begin{theorem}
\label{THM:isound}
An ATRS $\RR$ is innermost terminating if $\UDE{\RR}$ is innermost
terminating.
\qed
\end{theorem}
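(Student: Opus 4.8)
The plan is to establish the contrapositive: assuming $\RR$ is \emph{not} innermost terminating, I construct an infinite innermost rewrite sequence in $\UDE{\RR}$, so that $\UDE{\RR}$ is not innermost terminating either.

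First I would pass from innermost to rightmost innermost evaluation. By the equivalence of innermost and rightmost innermost termination \cite{K00}, failure of innermost termination of $\RR$ yields failure of rightmost innermost termination, i.e.\ there is an infinite rightmost innermost sequence $t_0 \rito_\RR t_1 \rito_\RR t_2 \rito_\RR \cdots$ over the signature of $\RR$. This step is exactly what licenses the use of Lemma~\ref{LEM:commute}, whose commutation property is stated for $\rito_\RR$ rather than $\ito_\RR$ (and indeed fails for the latter, by the counterexample preceding the lemma).

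Next I would lift this sequence through $\UU$ one step at a time. Set $s_0 = t_0$, so that $s_0 \fromTB{*}{\UU} t_0$ holds trivially (zero $\UU$-steps). Assume inductively that a term $s_i$ with $s_i \fromTB{*}{\UU} t_i$ has been produced. Then the pair $(s_i, t_{i+1})$ lies in $\fromTB{*}{\UU} \cdot \rito_\RR$, so Lemma~\ref{LEM:commute} supplies a term $s_{i+1}$ with $s_i \ito^+_{\UDE{\RR}} s_{i+1}$ and $s_{i+1} \fromTB{*}{\UU} t_{i+1}$; the second conjunct is precisely the invariant needed to repeat the argument. Concatenating the pieces gives an infinite derivation $s_0 \ito^+_{\UDE{\RR}} s_1 \ito^+_{\UDE{\RR}} s_2 \ito^+_{\UDE{\RR}} \cdots$ in $\UDE{\RR}$.

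The technical heart has already been absorbed into Lemma~\ref{LEM:commute}, so the only things left to verify are routine: that the $\fromTB{*}{\UU}$ invariant is preserved at each stage so the induction never gets stuck, and that the glued-together sequence is genuinely infinite. The latter is guaranteed by the strict $+$ in $\ito^+_{\UDE{\RR}}$, which forces each segment to contribute at least one innermost $\UDE{\RR}$-step. I expect no real obstacle here; the subtle work was the move to rightmost innermost rewriting in Lemma~\ref{LEM:commute}, which prevents parallel redexes from becoming nested under uncurrying.
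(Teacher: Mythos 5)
Your proposal is correct and follows essentially the same route as the paper, which derives the theorem in one sentence from Lemma~\ref{LEM:commute} together with the equivalence of innermost and rightmost innermost termination from~\cite{K00}; your write-up merely makes explicit the step-by-step lifting (the invariant $s_i \fromTB{*}{\UU} t_i$ and the strictness of $\ito^+_{\UDE{\RR}}$) that the paper leaves implicit. No gaps.
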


\section{Derivational Complexity}
\label{DER:main}

In this section we investigate how the uncurrying transformation 
affects
derivational complexity for full and innermost rewriting.

\subsection{Full Rewriting}

It is sound to use uncurrying as a preprocessor for proofs of upper bounds
on the derivational complexity:

\begin{theorem}
\label{THM:dc}
If $\RR$ is a terminating ATRS then
$\dc{\RR}{n} \in \OO(\dc{\UDE{\RR}}{n})$.
\end{theorem}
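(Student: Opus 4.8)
The plan is to establish the stronger inequality $\dc{\RR}{n} \leqslant \dc{\UDE{\RR}}{n}$ for every $n$, from which the claimed bound $\dc{\RR}{n} \in \OO(\dc{\UDE{\RR}}{n})$ follows immediately with constants $M = 1$ and $N = 0$. Since $\RR$ is terminating, Theorem~\ref{main} guarantees that $\UDE{\RR}$ is terminating as well, so that $\dc{\UDE{\RR}}{n}$ is well-defined. Fixing $n$ and using that $\RR$ is finite, I would choose a witnessing term $t$ with $|t| \leqslant n$ and $\dl{\to_\RR}{t} = \dc{\RR}{n}$, and pick a maximal derivation $t = t_0 \to_\RR t_1 \to_\RR \cdots \to_\RR t_m$ with $m = \dc{\RR}{n}$.

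The key step is to lift this derivation through uncurrying. By Lemma~\ref{uncurried step}, each single step $t_i \to_\RR t_{i+1}$ gives $\UNCURRY{t_i} \to^+_{\UDE{\RR}} \UNCURRY{t_{i+1}}$. Concatenating these, the term $\UNCURRY{t}$ admits a $\UDE{\RR}$-derivation of length at least $m$, so $\dl{\to_{\UDE{\RR}}}{\UNCURRY{t}} \geqslant m$. Note that this direction needs no strategy restriction, since we are merely counting steps and each $\RR$-step contributes at least one $\UDE{\RR}$-step.

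It remains to control the size of the starting term $\UNCURRY{t}$. Here I would observe that every rule of $\UU(\RR)$, having the shape $f_i(x_1,\dots,x_i) \circ y \to f_{i+1}(x_1,\dots,x_i,y)$, removes exactly one occurrence of $\circ$ and hence strictly decreases term size by one. Consequently $|\UNCURRY{t}| \leqslant |t| \leqslant n$, so $\UNCURRY{t}$ still respects the size bound $n$. Combining this with the previous paragraph yields $\dc{\UDE{\RR}}{n} \geqslant \dl{\to_{\UDE{\RR}}}{\UNCURRY{t}} \geqslant m = \dc{\RR}{n}$, as desired.

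The argument is essentially routine, and I expect no genuine obstacle: Lemma~\ref{uncurried step} already packages the simulation of a single rewrite step, and unlike the innermost setting of Section~\ref{INN:main} no complication (such as parallel redexes becoming nested under uncurrying) interferes with a plain step count. The only point requiring attention is the size estimate $|\UNCURRY{t}| \leqslant |t|$, which is exactly what guarantees that the simulating derivation starts from a term of size at most $n$.
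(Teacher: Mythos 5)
Your proof is correct and follows essentially the same route as the paper: both simulate a maximal $\RR$-derivation step by step via Lemma~\ref{uncurried step} and conclude the strengthened inequality $\dc{\RR}{n} \leqslant \dc{\UDE{\RR}}{n}$. The only (equally valid) divergence is in handling the starting term: the paper prepends $t_0 \to_\UU^* \UNCURRY{t_0}$ so that the counted $\UDE{\RR}$-derivation starts at $t_0$ itself, whereas you start at $\UNCURRY{t_0}$ and use the observation that each $\UU$-rule removes one $\circ$, so $|\UNCURRY{t_0}| \leqslant |t_0| \leqslant n$.
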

\begin{proof}
Consider an arbitrary maximal rewrite sequence
\(
t_0 \to_\RR t_1 \to_\RR t_2 \to_\RR \cdots \to_\RR t_m
\)
which we can transform into the sequence
\[
\UNCURRY{t_0}
\to^+_{\UDE{\RR}} \UNCURRY{t_1}
\to^+_{\UDE{\RR}} \UNCURRY{t_2}
\to^+_{\UDE{\RR}} \cdots
\to^+_{\UDE{\RR}} \UNCURRY{t_m}
\]
using Lemma~\ref{uncurried step}.
Moreover, $t_0 \to_{\UDE{\RR}}^* \UNCURRY{t_0}$ holds. Therefore, 
$\dl{\to_\RR}{t_0} \leqslant \dl{\to_{\UDE{\RR}}}{t_0}$.
Hence $\dc{\RR}{n} \leqslant \dc{\UDE{\RR}}{n}$ holds for all 
$n \in \NN$.
\end{proof}
 
Next we show that uncurrying preserves polynomial complexity.
Hence we
disregard duplicating (exponential complexity, cf.~\cite{HM08}) and empty
(constant complexity) ATRSs.
A TRS $\RR$ is called \emph{length-reducing} if $\RR$ is non-duplicating 
and $|\ell| > |r|$ for all rules $\ell \to r \in \RR$. The following lemma
is an easy consequence of \cite[Theorem 23]{HM08}. 
Here for a relative TRS ${\RR/\SS}$ the derivational complexity
$\dc{\RR/\SS}{n}$ is based on the rewrite relation $\to_{\RR/\SS}$
which is defined as $\to_\SS^* \cdot \to_\RR^{} \cdot \to_\SS^*$.

\begin{lemma}
\label{LEM:relative}
Let $\RR$ be a non-empty non-duplicating TRS over a signature containing
at least one symbol of arity at least two and let $\SS$ be a
length-reducing TRS. If $\RR \cup \SS$ is terminating then
$\dc{\RR \cup \SS}{n} \in \OO(\dc{\RR/\SS}{n})$.
\qed
\end{lemma}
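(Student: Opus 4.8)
The plan is to bound the length of a maximal $\RR\cup\SS$ derivation by counting its $\RR$-steps and its $\SS$-steps separately. Fix a term $t_0$ with $|t_0|\leqslant n$ together with a maximal derivation $t_0 \to_{\RR\cup\SS}\cdots\to_{\RR\cup\SS} t_m$ of length $m=\dl{\to_{\RR\cup\SS}}{t_0}$ (this exists because $\RR\cup\SS$ is finite and terminating), and let $a$ and $b$ be the numbers of $\RR$- and $\SS$-steps, so $m=a+b$. First I would project the derivation onto $\RR/\SS$: absorbing, for each $\RR$-step, the block of $\SS$-steps preceding it (and the trailing $\SS$-steps into the last relative step) rewrites the derivation as $t_0 \to_{\RR/\SS}^{a} t_m$. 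Hence $a\leqslant\dl{\to_{\RR/\SS}}{t_0}\leqslant\dc{\RR/\SS}{n}$.

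Next I would bound $b$ by a size argument. Let $c$ be the maximum of $0$ and all values $|r|-|\ell|$ for $\ell\to r\in\RR$. Since $\RR$ is non-duplicating no variable is copied, so each $\RR$-step increases the term size by at most $c$; since $\SS$ is length-reducing, each $\SS$-step decreases the size by at least $1$. Telescoping the size differences along the whole derivation gives $|t_m|-|t_0|\leqslant ac-b$, and using $|t_m|\geqslant 1$ and $|t_0|\leqslant n$ this rearranges to $b\leqslant ac+n$. Combining the two estimates yields $m=a+b\leqslant(1+c)\,a+n\leqslant(1+c)\,\dc{\RR/\SS}{n}+n$, and taking the maximum over all $t_0$ with $|t_0|\leqslant n$ gives $\dc{\RR\cup\SS}{n}\leqslant(1+c)\,\dc{\RR/\SS}{n}+n$.

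The remaining, and main, obstacle is the additive summand $n$, which must be absorbed into $\OO(\dc{\RR/\SS}{n})$; this is exactly where the assumptions that $\RR$ is non-empty and that the signature contains a symbol $g$ of arity at least two enter, through the linear lower bound of \cite[Theorem~23]{HM08}. The underlying construction packs linearly many disjoint redexes into a term of linear size: for a rule $\ell\to r\in\RR$, the comb $w_j=g(\ell,g(\ell,\dots,g(\ell,\ell)))$ of depth $j$ has size $\Theta(j)$ yet contains $j+1$ pairwise disjoint occurrences of the redex $\ell$, which can be contracted one after another by $\to_\RR\subseteq\to_{\RR/\SS}$; thus $\dl{\to_{\RR/\SS}}{w_j}\geqslant j+1$ and $\dc{\RR/\SS}{n}$ is bounded below by a linear function of $n$ (note that $\RR/\SS$ is terminating, as $\to_{\RR/\SS}\subseteq\to_{\RR\cup\SS}^{+}$, so $\dc{\RR/\SS}{n}$ is finite). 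Hence $n\leqslant d\cdot\dc{\RR/\SS}{n}$ for some constant $d$ and all sufficiently large $n$, so the previous estimate becomes $\dc{\RR\cup\SS}{n}\leqslant\bigl((1+c)+d\bigr)\,\dc{\RR/\SS}{n}$ for large $n$, while the finitely many remaining values of $n$ are covered by the additive constant $N$ in the definition of $\OO$. Both hypotheses are genuinely needed here: for $\RR=\{a\to b\}$ over a signature of constants and unary symbols the relative complexity is only constant, so without an at-least-binary symbol no disjoint copies of the redex can be created and the additive $n$ cannot be absorbed.
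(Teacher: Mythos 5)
Your proposal is correct and is essentially the paper's own argument made explicit: the paper proves Lemma~\ref{LEM:relative} only by deferring to \cite[Theorem~23]{HM08}, and your two ingredients---bounding the number $b$ of $\SS$-steps by $ac+n$ via size bookkeeping (using non-duplication of $\RR$ for the $+c$ per $\RR$-step and length-reduction of $\SS$ for the $-1$ per $\SS$-step), then absorbing the additive $n$ through a linear lower bound on $\dc{\RR/\SS}{n}$ obtained from the comb of disjoint redexes (exactly where non-emptiness of $\RR$ and the symbol of arity at least two enter, as your unary-signature counterexample confirms)---are precisely the content of that citation combined with the paper's additive-constant definition of $\OO$. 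There is no gap: even the degenerate case $a=0$, where the derivation consists solely of $\SS$-steps and $m\leqslant n$, is covered by your final inequality, and your use of $\to_{\RR/\SS}\subseteq\to^+_{\RR\cup\SS}$ to ensure well-definedness of the relative derivation height is sound.
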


Note that the above lemma does not hold if the TRS $\RR$ is empty.

\begin{theorem}
\label{dc}
Let $\RR$ be a non-empty ATRS.
If $\dc{\RR}{n}$ is in $\OO(n^k)$ then $\dc{\UNCURRY{\RR_\eta}/\UU}{n}$
and $\dc{\UDE{\RR}}{n}$ are in $\OO(n^k)$.
\end{theorem}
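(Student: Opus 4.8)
The plan is to establish the two bounds in sequence: first bound the relative complexity $\dc{\UNCURRY{\RR_\eta}/\UU}{n}$ by the derivational complexity of $\RR$ itself, and then lift this to the full system $\UDE{\RR}$ by an application of Lemma~\ref{LEM:relative}.

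For the first step I would exploit the correspondence between rewriting in $\UNCURRY{\RR_\eta}/\UU$ and rewriting in $\RR$ provided by Lemmata~\ref{curried step} and~\ref{R vs Reta}. A single $\UNCURRY{\RR_\eta}/\UU$-step has the form $s \to_\UU^* s' \to_{\UNCURRY{\RR_\eta}} u' \to_\UU^* u$, where all terms are over the signature of $\UDE{\RR}$. By Lemma~\ref{curried step}(2) the $\UU$-steps do not change the $\CC'$-normal form, so $\CURRYp{s} = \CURRYp{s'}$ and $\CURRYp{u'} = \CURRYp{u}$, while Lemma~\ref{curried step}(1) applied to $\RR_\eta$ together with Lemma~\ref{R vs Reta} yields $\CURRYp{s'} \to_{\RR_\eta} \CURRYp{u'}$, that is, $\CURRYp{s} \to_\RR \CURRYp{u}$. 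Thus every relative step contributes exactly one $\RR$-step, so a $\UNCURRY{\RR_\eta}/\UU$-derivation of length $m$ from $t$ induces an $\RR$-derivation of length $m$ from $\CURRYp{t}$, giving $\dl{\to_{\UNCURRY{\RR_\eta}/\UU}}{t} \leqslant \dl{\to_\RR}{\CURRYp{t}}$. It then remains to control the size of $\CURRYp{t}$: currying blows up the size of a term only by a constant factor $c$ depending on the maximal applicative arity (a symbol of arity $m$ is replaced by a constant and $m$ applications), and identifying symbols leaves the size unchanged, so $|\CURRYp{t}| \leqslant c\,|t|$ by a routine induction on $t$. Hence $\dc{\UNCURRY{\RR_\eta}/\UU}{n} \leqslant \dc{\RR}{c\,n}$, and since $\dc{\RR}{n} \in \OO(n^k)$ also $\dc{\RR}{c\,n} \in \OO(n^k)$, settling the first claim. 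Well-definedness of all derivation heights follows from Theorem~\ref{main}, which makes $\UDE{\RR}$ and hence $\UNCURRY{\RR_\eta}/\UU$ terminating, as $\RR$ is terminating by having finite derivational complexity.

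For the second step I would apply Lemma~\ref{LEM:relative} with first system $\UNCURRY{\RR_\eta}$ and relative part $\UU$, obtaining $\dc{\UDE{\RR}}{n} \in \OO(\dc{\UNCURRY{\RR_\eta}/\UU}{n}) \subseteq \OO(n^k)$. The hypotheses are checked as follows. The system $\UNCURRY{\RR_\eta}$ is non-empty since $\RR$ is. Since $\dc{\RR}{n}$ is polynomial, $\RR$ is non-duplicating (duplicating systems have exponential complexity by \cite{HM08}); $\eta$-saturation only adds rules $\ell \circ x \to r \circ x$ with $x$ fresh, and taking the $\UU$-normal form preserves variable multiplicities, so $\UNCURRY{\RR_\eta}$ is non-duplicating as well. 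The system $\UU$ is length-reducing since each rule $f_i(x_1,\dots,x_i) \circ y \to f_{i+1}(x_1,\dots,x_i,y)$ strictly decreases the size by one and is non-duplicating, and $\UDE{\RR}$ is terminating by Theorem~\ref{main}. The one delicate point is the requirement of a symbol of arity at least two: if $\UU \neq \emptyset$ the application symbol $\circ$ occurs and supplies it, whereas if $\UU = \emptyset$ the relative and full systems coincide, $\dc{\UDE{\RR}}{n} = \dc{\UNCURRY{\RR_\eta}/\UU}{n}$, and the second claim follows directly from the first without invoking Lemma~\ref{LEM:relative}. The main obstacle I anticipate is precisely this bookkeeping in the second step rather than any deep argument — deducing non-duplication of $\RR$ from polynomiality of $\dc{\RR}{n}$, verifying its stability under $\eta$-saturation and uncurrying, and isolating the degenerate case $\UU = \emptyset$ so the arity hypothesis is never applied vacuously.
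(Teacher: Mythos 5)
Your proposal is correct and follows essentially the same route as the paper's proof: transform a maximal $\to_{\UNCURRY{\RR_\eta}/\UU}$-derivation into an $\RR$-derivation via Lemma~\ref{curried step} (with Lemma~\ref{R vs Reta}), bound the size blowup of currying by a constant factor (the paper uses the sharper $|\CURRYp{t_0}| \leqslant 2|t_0|$), and then obtain the bound for $\UDE{\RR}$ from Lemma~\ref{LEM:relative}, using that polynomial complexity forces $\RR$ to be non-duplicating and terminating and that $\UU$ is length-reducing. Your additional bookkeeping --- transferring non-duplication through $\eta$-saturation and uncurrying, and isolating the degenerate case $\UU = \emptyset$ for the arity-two hypothesis --- is sound and merely makes explicit what the paper's proof leaves implicit.
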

\begin{proof}
Let $\dc{\RR}{n}$ be in $\OO(n^k)$ and consider a maximal rewrite sequence
of $\to_{\UNCURRY{\RR_\eta}/\UU}$ starting from an arbitrary term $t_0$:
\[
t_0 \to_{\UNCURRY{\RR_\eta}/\UU} 
t_1 \to_{\UNCURRY{\RR_\eta}/\UU} \cdots \to_{\UNCURRY{\RR_\eta}/\UU} t_m
\]
By Lemma~\ref{curried step}
we obtain the sequence
$\CURRYp{t_0} \to_\RR \CURRYp{t_1} \to_\RR \cdots \to_\RR \CURRYp{t_m}$.
Thus, $\dl{\to_{\UNCURRY{\RR_\eta}/\UU}}{t_0} \leqslant 
\dl{\to_\RR}{\CURRYp{t_0}}$. Because $|\CURRYp{t_0}| \leqslant 2 |t_0|$,
we obtain $\dc{\UNCURRY{\RR_\eta}/\UU}{n} \leqslant \dc{\RR}{2n}$.
From the assumption the right-hand side is in $\OO(n^k)$, hence
$\dc{\UNCURRY{\RR_\eta}/\UU}{n}$ is in $\OO(n^k)$.
Since $\dc{\RR}{n}$ is in $\OO(n^k)$, $\RR$ must be non-duplicating
and terminating. Because $\UU$ is length-reducing, Lemma~\ref{LEM:relative}
yields that $\dc{\UDE{\RR}}{n}$ also is in $\OO(n^k)$.
\end{proof}

In practice it is recommendable to investigate
$\dc{\UNCURRY{\RR_\eta}/\UU}{n}$ instead of $\dc{\UDE{\RR}}{n}$,
see \cite{ZK10}.
The next example shows that uncurrying might be useful to enable criteria
for polynomial complexity.

\begin{example}
Consider the ATRS $\RR$ consisting of the two rules
\begin{xalignat*}{2}
\m{add} \jc x \jc \m{0} &\to x & 
\m{add} \jc x \jc (\m{s} \jc y) &\to \m{s} \jc (\m{add} \jc x \jc y)
\end{xalignat*}
The system $\UDE{\RR}$ consists of the rules
\begin{xalignat*}{3}
\m{add_2}(x,\m{0}) &\to x &
&&
\m{add_2}(x,\m{s_1}(y)) &\to \m{s_1}(\m{add_2}(x,y)) \\
\m{add_1}(x) \circ y &\to \m{add_2}(x,y)&
\m{add} \circ x &\to \m{add_1}(x)&
\m{s} \circ x &\to \m{s_1}(x)
\end{xalignat*}
The 2-dimensional TMI $\MM$
\begin{xalignat*}{2}
\m{add_2}_\MM(\vec x,\vec y) &= 
\m{\circ}_\MM(\vec x,\vec y) =
\BM
1\NE 1\NR
0\NE 1\NR
\EM \vec x + 
\BM
1\NE 1\NR
0\NE 1\NR
\EM \vec y
&
\m{add_1}_\MM(\vec x) &= 
\m{s_1}_\MM(\vec x) = 
\BM
1\NE 0\NR
0\NE 1\NR
\EM \vec x + 
\BM
0\NR
1\NR
\EM \\
\m{add}_\MM &= 
\m{s}_\MM = 
\m{0}_\MM = 
\BM
0\NR
1\NR
\EM
\end{xalignat*}
orients all rules in $\UDE{\RR}$ strictly, inducing a quadratic upper
bound on the derivational complexity of $\UDE{\RR}$ 
according to Theorem~\ref{THM:tmi}
and by Theorem~\ref{THM:dc} also of $\RR$. In contrast, 
the TRS~$\RR$ itself
does not admit such an interpretation of dimension 2. To see this, we
encoded the required condition as a satisfaction problem in non-linear
arithmetic over the integers.
\MINISMT~\cite{ZM10}%
\footnote{\url{http://cl-informatik.uibk.ac.at/software/minismt/}}
can prove this problem unsatisfiable by simplifying it into a
trivially unsatisfiable constraint. Details can be inferred from the
website mentioned in Footnote~\ref{FOO:web}.
\end{example}

\subsection{Innermost Rewriting}

Next we consider innermost derivational complexity. Let $\RR$ be an
innermost terminating TRS. From a result by
Krishna Rao~\cite[Section~5.1]{K00} which has been generalized by
van Oostrom~\cite[Theorems~2 and~3]{O07} we infer that
$\dl{\ito_\RR}{t} = \dl{\rito_\RR}{t}$ holds for all terms $t$.

\begin{theorem}
\label{THM:idc}
If $\RR$ is an innermost terminating ATRS then
$\idc{\RR}{n} \in \OO(\idc{\UDE{\RR}}{n})$.
\end{theorem}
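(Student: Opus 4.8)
The plan is to prove $\idc{\RR}{n} \in \OO(\idc{\UDE{\RR}}{n})$ by mirroring the proof of the full-rewriting case (Theorem~\ref{THM:dc}), but replacing ordinary rewriting by \emph{rightmost} innermost rewriting so that Lemma~\ref{LEM:commute} becomes applicable. The key observation is that Lemma~\ref{LEM:commute} provides exactly the simulation we need: a single rightmost innermost step $\rito_\RR$ starting from a completely uncurried term can be simulated by a nonempty innermost rewrite sequence in $\UDE{\RR}$, modulo trailing $\to_\UU^*$ steps. Since innermost and rightmost innermost derivation heights coincide (the result of Krishna~Rao and van~Oostrom cited just before the statement), bounding $\dl{\ito_\RR}{t}$ is the same as bounding $\dl{\rito_\RR}{t}$.

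First I would fix an arbitrary term $t_0$ over the signature of $\RR$ and consider a maximal rightmost innermost sequence
\[
t_0 \rito_\RR t_1 \rito_\RR \cdots \rito_\RR t_m
\]
of length $m = \dl{\rito_\RR}{t_0} = \dl{\ito_\RR}{t_0}$. The aim is to lift this to an innermost sequence in $\UDE{\RR}$ whose length is at least $m$, starting from $t_0$ itself (which is a legitimate $\UDE{\RR}$-term). I would apply Lemma~\ref{LEM:commute} repeatedly: since $t_0 = \UNCURRY{t_0}\!\fromTB{*}{\UU} t_0$ trivially (or more precisely one tracks a term $s_i$ with $s_i \fromTB{*}{\UU} t_i$ at each stage), each step $t_i \rito_\RR t_{i+1}$ yields $s_i \ito^+_{\UDE{\RR}} s_{i+1} \fromTB{*}{\UU} t_{i+1}$ for some $s_{i+1}$. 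Chaining these gives an innermost $\UDE{\RR}$-sequence
\[
s_0 \ito^+_{\UDE{\RR}} s_1 \ito^+_{\UDE{\RR}} \cdots \ito^+_{\UDE{\RR}} s_m
\]
of length at least $m$ starting from $s_0$. The starting term $s_0$ can be taken to be $t_0$ (for $i=0$ the premise $t_0 \fromTB{*}{\UU} t_0$ holds with the empty $\UU$-sequence), so $\dl{\ito_{\UDE{\RR}}}{t_0} \geqslant m = \dl{\ito_\RR}{t_0}$. Taking the maximum over all $t_0$ with $|t_0| \leqslant n$, and noting that $|s_0| = |t_0| \leqslant n$, yields $\idc{\RR}{n} \leqslant \idc{\UDE{\RR}}{n}$, which is the desired bound.

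The main obstacle I anticipate is the bookkeeping needed to thread Lemma~\ref{LEM:commute} through the whole sequence, because the lemma produces intermediate terms $s_{i+1}$ that are only $\UU$-reducts-related to $t_{i+1}$ rather than being $t_{i+1}$ on the nose. One must verify that the innermost sequences produced for consecutive steps genuinely concatenate into a single innermost $\UDE{\RR}$-derivation, i.e.\ that the final term of the simulation of step $i$ is exactly the initial term of the simulation of step $i+1$. This is precisely guaranteed by the shape of Lemma~\ref{LEM:commute}, whose conclusion $\ito^+_{\UDE{\RR}} \cdot \fromTB{*}{\UU}$ leaves a term $s_{i+1}$ with $s_{i+1} \fromTB{*}{\UU} t_{i+1}$, matching the hypothesis needed to apply the lemma again to $t_{i+1} \rito_\RR t_{i+2}$. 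Beyond this threading, the only subtlety is the appeal to the Krishna~Rao/van~Oostrom equivalence of $\dl{\ito_\RR}{t}$ and $\dl{\rito_\RR}{t}$, which is already stated in the text and lets us work with the rightmost strategy throughout while still bounding the genuine innermost derivational complexity.
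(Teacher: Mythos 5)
Your proposal is correct and takes essentially the same route as the paper's own proof: both lift a maximal rightmost innermost $\RR$-sequence $t_0 \rito_\RR \cdots \rito_\RR t_m$ to an innermost $\UDE{\RR}$-sequence of length at least $m$ starting from $t_0$ itself, by repeatedly applying Lemma~\ref{LEM:commute} while threading intermediate terms $t_i'$ with $t_i \to_\UU^* t_i'$, and both invoke the cited Krishna~Rao/van~Oostrom result to replace $\dl{\ito_\RR}{t_0}$ by $\dl{\rito_\RR}{t_0}$, yielding $\idc{\RR}{n} \leqslant \idc{\UDE{\RR}}{n}$. The bookkeeping you identify as the main obstacle is exactly what the paper's proof handles, and your resolution (the conclusion $\ito^+_{\UDE{\RR}} \cdot \fromTB{*}{\UU}$ of the lemma re-establishes its own premise at the next step, with the base case given by the empty $\UU$-sequence $t_0 \fromTB{*}{\UU} t_0$) matches it.
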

\begin{proof}
Consider a maximal rightmost innermost rewrite sequence
$t_0 \rito_\RR t_1 \rito_\RR t_2 \rito_\RR \cdots \rito_\RR t_m$.
Using Lemma~\ref{LEM:commute} we obtain a sequence 
\[
t_0
\ito^+_{\UDE{\RR}} t_1' 
\ito^+_{\UDE{\RR}} t_2' 
\ito^+_{\UDE{\RR}} \cdots
\ito^+_{\UDE{\RR}} t_m'
\]
for terms $t_1', t_2', \dots, t_m'$ such that $t_i \to_\UU^* t_i'$ for
all  $1 \leqslant i \leqslant m$. It follows that
\(
\dl{\ito_\RR}{t_0} =
\dl{\rito_\RR}{t_0} 
\leqslant \dl{\ito_{\UDE{\RR}}}{t_0}
\)
and we conclude
$\idc{\RR}{n} \in \OO(\idc{\UDE{\RR}}{n})$.
\end{proof}

As Example~\ref{EX:reflecti} 
showed, uncurrying does not preserve
innermost termination. Similarly, it does not preserve innermost polynomial
complexity even if the original ATRS has linear innermost derivational
complexity.

\begin{example}
Consider the non-duplicating ATRS $\RR$ consisting of the two rules
\begin{xalignat*}{2}
\m{f} &\to \m{s} &
\m{f} \jc (\m{s} \jc x) &\to \m{s} \jc (\m{s} \jc (\m{f} \jc x))
\end{xalignat*}
Since the second rule is never used in innermost rewriting,
$\idc{\RR}{n} \in \OO(n)$
is easily shown by induction on $n$. We show
that the innermost derivational complexity of $\UDE{\RR}$ is at least
exponential. The TRS
$\UDE{\RR}$ consists of the rules
\begin{xalignat*}{5}
\m{f} &\to \m{s} &
\m{f}_1(x) &\to \m{s}_1(x) &
\m{f}_1(\m{s}_1(x)) &\to \m{s}_1(\m{s}_1(\m{f}_1(x))) &
\m{f} \circ x &\to \m{f}_1(x) &
\m{s} \circ x &\to \m{s}_1(x)
\end{xalignat*}
and one can verify that
$\dl{\ito_{\UDE{\RR}}}{\m{f}_1^n(\m{s}_1(x))} \geqslant 2^n$ for all
$n \geqslant 1$. Hence, $\idc{\UDE{\RR}}{n+3} \geqslant 2^n$ for all
$n \geqslant 0$.
\end{example}

\section{Experimental Results}
\label{EXP:main}

The results from this paper are implemented in the termination prover
\TTTT~\cite{KSZM09}.%
\footnote{\url{http://cl-informatik.uibk.ac.at/software/ttt2/}} 
Version 7.0.2 of the termination problem data base
(TPDB)\footnote{\url{http://termination-portal.org/wiki/TPDB}}
contains 195 ATRSs for full rewriting and 
18
ATRSs for innermost
rewriting. All tests have been performed on a single core of a server
equipped with eight dual-core AMD
Opteron\textsuperscript\textregistered\xspace processors
885 running at a clock rate of 2.6 GHz and 64 GB of main memory.

\smallskip 

Experiments\footnote{%
\label{FOO:web}%
\url{http://cl-informatik.uibk.ac.at/software/ttt2/10hor/}}
give evidence that uncurrying allows to handle significantly more systems.
For proving innermost termination we considered two popular termination
methods, namely the subterm criterion \cite{HM07} and matrix
interpretations \cite{EWZ08} of dimensions one to four. 
The implementation of the latter is based on
SAT solving (cf.~\cite{EWZ08}).
For a matrix
interpretation of dimension $d$ we used $5-d$ bits to represent
natural numbers in
matrix coefficients. An additional bit was used for intermediate results.
Both methods are integrated within the dependency pair framework using 
dependency graph reasoning and usable rules as proposed
in \cite{GTS05,GTSF06,HM05}. 

\smallskip 

Table~\ref{TAB:inn} shows the number of
systems that could be proved innermost terminating. In the table
$+$ ($-$) indicates that uncurrying has (not) been used as preprocessing
step, e.g., for the subterm criterion the number of successful proofs
increases from 42 to 55 if uncurrying is used as a preprocessing
transformation. For the setting based on matrix interpretations the gains
are even larger.
In the table, the numbers in parentheses denote the dimensions of the 
matrices.

\newlength{\dist}
\settowidth{\dist}{matrix (1)}
\newcommand{\fl}[1]{\makebox[0.5\dist][r]{#1}}
\newcommand{\fr}[1]{\makebox[0.5\dist][l]{#1}}
\newlength{\distt}
\settowidth{\distt}{subterm}
\newcommand{\fsl}[1]{\makebox[0.5\distt][r]{#1}}
\newcommand{\fsr}[1]{\makebox[0.5\distt][l]{#1}}

\begin{table}[tb]
\renewcommand{\arraystretch}{1.2}
\caption{Innermost termination for 213 ATRSs.\strut}
\label{TAB:inn}
\centering
\begin{tabular}{@{}%
r@{\,$/$\,}l@{\qquad}%sc
r@{\,$/$\,}l@{\qquad}%m1
r@{\,$/$\,}l@{\qquad}%m2
r@{\,$/$\,}l@{\qquad}%m3
r@{\,$/$\,}l%m4
@{}}
\hline
\multicolumn{2}{@{}c@{\qquad}}{subterm} &
\multicolumn{2}{c@{\qquad}}{matrix (1)} &
\multicolumn{2}{c@{\qquad}}{matrix (2)} &
\multicolumn{2}{c@{\qquad}}{matrix (3)} &
\multicolumn{2}{c@{}}{matrix (4)} 
\\
$-$&$+$ & 
\fl{$-$}&\fr{$+$} & 
\fl{$-$}&\fr{$+$} &
\fl{$-$}&\fr{$+$} & 
\fl{$-$}&\fr{$+$}
\\
 \fsl{42} & \fsr{55}  &
 \fl{67}  & \fr{102} &
 \fl{111} & \fr{142} &
 \fl{113} & \fr{144} &
 \fl{114} & \fr{145}
\\
\hline
\end{tabular}
\renewcommand{\arraystretch}{1}
\end{table}

\smallskip 

Table~\ref{TAB:dc} shows how uncurrying improves the performance of
\TTTT for derivational complexity. In this table we used 
TMIs as presented in Theorem~\ref{THM:tmi}.
Coefficients of TMIs
are represented with $\max \{ 2, 5-d \}$
bits; again an additional bit is allowed for intermediate results. 
If uncurrying is used as preprocessing transformation, 
TMIs can, e.g., show 14 systems to have at most quadratic
derivational complexity while without uncurrying the method only applies
to 10 systems. Since \TTTT has no special methods for proving
\emph{innermost} derivational complexity, 
the numbers in rows \textrm{dc} and \textrm{idc} coincide.

\settowidth{\dist}{TMI (1)}
\renewcommand{\fl}[1]{\makebox[0.5\dist][r]{#1}}
\renewcommand{\fr}[1]{\makebox[0.5\dist][l]{#1}}
\begin{table}[tb]
\renewcommand{\arraystretch}{1.2}
\caption{(Innermost) derivational complexity for 195 (213) ATRSs.\strut}
\label{TAB:dc}
\centering
\begin{tabular}{@{\ }%
 l@{\qquad}%label
 r@{\,$/$\,}l@{\qquad}%m1
 r@{\,$/$\,}l@{\qquad}%m2
 r@{\,$/$\,}l@{\qquad}%m3
 r@{\,$/$\,}l%m4
 @{}}
\hline
&
\multicolumn{2}{@{}c@{\qquad}}{TMI (1)} &
\multicolumn{2}{@{}c@{\qquad}}{TMI (2)} &
\multicolumn{2}{@{}c@{\qquad}}{TMI (3)} &
\multicolumn{2}{@{}c@{}}{TMI (4)} 
\\
& $-$&$+$ & $-$&$+$ & $-$&$+$ & $-$&$+$
\\
\textrm{dc}  & \fl{3}&\fr{4} & \fl{10}&\fr{14} & \fl{12}&\fr{26} &
\fl{12}&\fr{28} \\
\textrm{idc} & \fl{3}&\fr{4} & \fl{10}&\fr{14} & \fl{12}&\fr{26} &
\fl{12}&\fr{28} \\
\hline
\end{tabular}
\renewcommand{\arraystretch}{1}
\end{table}

\section{Conclusion}
\label{conclusion}

In this paper we studied properties of the uncurrying transformation
from~\cite{HMZ08} for innermost rewriting and (innermost) derivational
complexity. The significance of these results has been confirmed
empirically.

For proving (innermost) termination of applicative systems we mention
transformation~$\AA$~\cite{GTS05} as related work. The main benefit
of the approach in~\cite{GTS05} is that in contrast to our setting no
auxiliary
uncurrying rules are necessary. However, transformation~$\AA$ only works
for \emph{proper} ATRSs without head variables in the (left- and)
right-hand sides of rewrite rules. Here proper means that any constant
always appears with the same applicative arity. 

We are not aware of other investigations dedicated to (derivational)
complexity analysis of ATRSs.
However, we remark that transformation~$\AA$ preserves derivational 
complexity.This is straightforward from \cite[Lemma~2.1(3)]{KKSV96}.

As future work we plan to incorporate the results for innermost
termination into the dependency pair processors presented in~\cite{HMZ08}.


\begin{thebibliography}{10}
\providecommand{\bibitemstart}[1]{\bibitem{#1}}
\providecommand{\bibitemend}{}
\providecommand{\bibliographystart}{}
\providecommand{\bibliographyend}{}
\providecommand{\urlprefix}{Available at }
\providecommand{\url}[1]{\texttt{#1}}
\providecommand{\href}[2]{\texttt{#2}}
\providecommand{\urlalt}[2]{\href{#1}{#2}}
\providecommand{\doi}[1]{doi:\urlalt{http://dx.doi.org/#1}{#1}}
\providecommand{\bibinfo}[2]{#2}
\bibliographystart

\bibitemstart{BN98}
\bibinfo{author}{F.~Baader} \& \bibinfo{author}{T.~Nipkow}
  (\bibinfo{year}{1998}): \emph{\bibinfo{title}{Term Rewriting and All That}}.
\newblock \bibinfo{publisher}{Cambridge University Press}.
\bibitemend

\bibitemstart{EWZ08}
\bibinfo{author}{J.~Endrullis}, \bibinfo{author}{J.~Waldmann} \&
  \bibinfo{author}{H.~Zantema} (\bibinfo{year}{2008}):
  \emph{\bibinfo{title}{Matrix Interpretations for Proving Termination of Term
  Rewriting}}.
\newblock {\sl \bibinfo{journal}{Journal of Automated Reasoning}}
  \bibinfo{volume}{40}(\bibinfo{number}{2-3}), pp. \bibinfo{pages}{195--220}.
\newblock \urlprefix\url{http://dx.doi.org/10.1007/s10817-007-9087-9}.
\bibitemend

\bibitemstart{GTS05}
\bibinfo{author}{J.~Giesl}, \bibinfo{author}{R.~Thiemann} \&
  \bibinfo{author}{P.~Schneider-Kamp} (\bibinfo{year}{2005}):
  \emph{\bibinfo{title}{Proving and Disproving Termination of Higher-Order
  Functions}}.
\newblock In: {\sl \bibinfo{booktitle}{Proc.\ 5th International Workshop on
  Frontiers of Combining Systems}}. {\sl \bibinfo{series}{LNCS}}
  \bibinfo{volume}{3717}, pp. \bibinfo{pages}{216--231}.
\newblock \urlprefix\url{http://dx.doi.org/10.1007/11559306\_12}.
\bibitemend

\bibitemstart{GTSF06}
\bibinfo{author}{J.~Giesl}, \bibinfo{author}{R.~Thiemann},
  \bibinfo{author}{P.~Schneider-Kamp} \& \bibinfo{author}{S.~Falke}
  (\bibinfo{year}{2006}): \emph{\bibinfo{title}{Mechanizing and Improving
  Dependency Pairs}}.
\newblock {\sl \bibinfo{journal}{Journal of Automated Reasoning}}
  \bibinfo{volume}{37}(\bibinfo{number}{3}), pp. \bibinfo{pages}{155--203}.
\newblock \urlprefix\url{http://dx.doi.org/10.1007/s10817-006-9057-7}.
\bibitemend

\bibitemstart{G95}
\bibinfo{author}{B.~Gramlich} (\bibinfo{year}{1995}):
  \emph{\bibinfo{title}{Abstract Relations between Restricted Termination and
  Confluence Properties of Rewrite Systems}}.
\newblock {\sl \bibinfo{journal}{Fundamenta Informaticae}}
  \bibinfo{volume}{24}(\bibinfo{number}{1-2}), pp. \bibinfo{pages}{3--23}.
\bibitemend

\bibitemstart{HM05}
\bibinfo{author}{N.~Hirokawa} \& \bibinfo{author}{A.~Middeldorp}
  (\bibinfo{year}{2005}): \emph{\bibinfo{title}{Automating the Dependency Pair
  Method}}.
\newblock {\sl \bibinfo{journal}{Information and Computation}}
  \bibinfo{volume}{199}(\bibinfo{number}{1-2}), pp. \bibinfo{pages}{172--199}.
\newblock \urlprefix\url{http://dx.doi.org/10.1016/j.ic.2004.10.004}.
\bibitemend

\bibitemstart{HM07}
\bibinfo{author}{N.~Hirokawa} \& \bibinfo{author}{A.~Middeldorp}
  (\bibinfo{year}{2007}): \emph{\bibinfo{title}{Tyrolean {T}ermination {T}ool:
  Techniques and Features}}.
\newblock {\sl \bibinfo{journal}{Information and Computation}}
  \bibinfo{volume}{205}(\bibinfo{number}{4}), pp. \bibinfo{pages}{474--511}.
\newblock \urlprefix\url{http://dx.doi.org/10.1016/j.ic.2006.08.010}.
\bibitemend

\bibitemstart{HMZ08}
\bibinfo{author}{N.~Hirokawa}, \bibinfo{author}{A.~Middeldorp} \&
  \bibinfo{author}{H.~Zankl} (\bibinfo{year}{2008}):
  \emph{\bibinfo{title}{Uncurrying for Termination}}.
\newblock In: {\sl \bibinfo{booktitle}{Proc.\ 15th International Conference on
  Logic for Programming, Artificial Intelligence and Reasoning}}. {\sl
  \bibinfo{series}{LNCS (LNAI)}} \bibinfo{volume}{5330}, pp.
  \bibinfo{pages}{667--681}.
\newblock \urlprefix\url{http://dx.doi.org/10.1007/978-3-540-89439-1\_46}.
\bibitemend

\bibitemstart{HM08}
\bibinfo{author}{N.~Hirokawa} \& \bibinfo{author}{G.~Moser}
  (\bibinfo{year}{2008}): \emph{\bibinfo{title}{Automated Complexity Analysis
  Based on the Dependency Pair Method}}.
\newblock In: {\sl \bibinfo{booktitle}{Proc.\ 4th International Joint
  Conference on Automated Reasoning}}. {\sl \bibinfo{series}{LNCS (LNAI)}}
  \bibinfo{volume}{5195}, pp. \bibinfo{pages}{364--380}.
\newblock \urlprefix\url{http://dx.doi.org/10.1007/978-3-540-71070-7\_32}.
\bibitemend

\bibitemstart{HL89}
\bibinfo{author}{D.~Hofbauer} \& \bibinfo{author}{C.~Lautemann}
  (\bibinfo{year}{1989}): \emph{\bibinfo{title}{Termination Proofs and the
  Length of Derivations}}.
\newblock In: {\sl \bibinfo{booktitle}{Proc.\ 3rd International Conference on
  Rewriting Techniques and Applications}}. {\sl \bibinfo{series}{LNCS}}
  \bibinfo{volume}{355}, pp. \bibinfo{pages}{167--177}.
\newblock \urlprefix\url{http://dx.doi.org/10.1007/3-540-51081-8\_107}.
\bibitemend

\bibitemstart{KKSV96}
\bibinfo{author}{R.~Kennaway}, \bibinfo{author}{J.W. Klop},
  \bibinfo{author}{M.R. Sleep} \& \bibinfo{author}{F.-J. de~Vries}
  (\bibinfo{year}{1996}): \emph{\bibinfo{title}{Comparing Curried and Uncurried
  Rewriting}}.
\newblock {\sl \bibinfo{journal}{Journal of Symbolic Computation}}
  \bibinfo{volume}{21}(\bibinfo{number}{1}), pp. \bibinfo{pages}{15--39}.
\bibitemend

\bibitemstart{KSZM09}
\bibinfo{author}{M.~Korp}, \bibinfo{author}{C.~Sternagel},
  \bibinfo{author}{H.~Zankl} \& \bibinfo{author}{A.~Middeldorp}
  (\bibinfo{year}{2009}): \emph{\bibinfo{title}{Tyrolean {Termination}
  {Tool}~2}}.
\newblock In: {\sl \bibinfo{booktitle}{Proc.\ 20th International Conference on
  Rewriting Techniques and Applications}}. {\sl \bibinfo{series}{LNCS}}
  \bibinfo{volume}{5595}, pp. \bibinfo{pages}{295--304}.
\newblock \urlprefix\url{http://dx.doi.org/10.1007/978-3-642-02348-4\_21}.
\bibitemend

\bibitemstart{MSW08}
\bibinfo{author}{G.~Moser}, \bibinfo{author}{A.~Schnabl} \&
  \bibinfo{author}{J.~Waldmann} (\bibinfo{year}{2008}):
  \emph{\bibinfo{title}{Complexity Analysis of Term Rewriting Based on Matrix
  and Context Dependent Interpretations}}.
\newblock In: {\sl \bibinfo{booktitle}{Proc.\ 28th International Conference on
  Foundations of Software Technology and Theoretical Computer Science}}. {\sl
  \bibinfo{series}{LIPIcs}}~\bibinfo{volume}{2}, pp. \bibinfo{pages}{304--315}.
\newblock \urlprefix\url{http://dx.doi.org/10.4230/LIPIcs.FSTTCS.2008.1762}.
\bibitemend

\bibitemstart{NZM10}
\bibinfo{author}{F.~Neurauter}, \bibinfo{author}{H.~Zankl} \&
  \bibinfo{author}{A.~Middeldorp} (\bibinfo{year}{2010}):
  \emph{\bibinfo{title}{Revisiting Matrix Interpretations for Polynomial
  Derivational Complexity of Term Rewriting}}.
\newblock In: {\sl \bibinfo{booktitle}{Proc.\ 17th International Conference on
  Logic for Programming, Artificial Intelligence and Reasoning}}. {\sl
  \bibinfo{series}{LNCS (ARCoSS)}} \bibinfo{volume}{6397}, pp.
  \bibinfo{pages}{550--564}.
\newblock \urlprefix\url{http://dx.doi.org/10.1007/978-3-642-16242-8\_39}.
\bibitemend

\bibitemstart{O07}
\bibinfo{author}{V.~van Oostrom} (\bibinfo{year}{2007}):
  \emph{\bibinfo{title}{Random Descent}}.
\newblock In: {\sl \bibinfo{booktitle}{Proc.\ 18th International Conference on
  Rewriting Techniques and Applications}}. {\sl \bibinfo{series}{LNCS}}
  \bibinfo{volume}{4533}, pp. \bibinfo{pages}{314--328}.
\newblock \urlprefix\url{http://dx.doi.org/10.1007/978-3-540-73449-9\_24}.
\bibitemend

\bibitemstart{K00}
\bibinfo{author}{M.R.K.~Krishna Rao} (\bibinfo{year}{2000}):
  \emph{\bibinfo{title}{Some Characteristics of Strong Innermost
  Normalization}}.
\newblock {\sl \bibinfo{journal}{Theoretical Computer Science}}
  \bibinfo{volume}{239}, pp. \bibinfo{pages}{141--164}.
\newblock \urlprefix\url{http://dx.doi.org/10.1016/S0304-3975(99)00215-7}.
\bibitemend

\bibitemstart{TeReSe}
\bibinfo{author}{TeReSe} (\bibinfo{year}{2003}): \emph{\bibinfo{title}{Term
  Rewriting Systems}}.
\newblock {\sl \bibinfo{series}{Cambridge Tracts in Theoretical Computer
  Science}}~\bibinfo{volume}{55}, \bibinfo{publisher}{Cambridge University
  Press}.
\bibitemend

\bibitemstart{W10}
\bibinfo{author}{J.~Waldmann} (\bibinfo{year}{2010}):
  \emph{\bibinfo{title}{Polynomially Bounded Matrix Interpretations}}.
\newblock In: {\sl \bibinfo{booktitle}{Proc.\ 21st International Conference on
  Rewriting Techniques and Applications}}. {\sl
  \bibinfo{series}{LIPIcs}}~\bibinfo{volume}{6}, pp. \bibinfo{pages}{357--372}.
\newblock \urlprefix\url{http://dx.doi.org/10.4230/LIPIcs.RTA.2010.357}.
\bibitemend

\bibitemstart{ZK10}
\bibinfo{author}{H.~Zankl} \& \bibinfo{author}{M.~Korp} (\bibinfo{year}{2010}):
  \emph{\bibinfo{title}{Modular Complexity Analysis via Relative Complexity}}.
\newblock In: {\sl \bibinfo{booktitle}{Proc.\ 21st International Conference on
  Rewriting Techniques and Applications}}. {\sl
  \bibinfo{series}{LIPIcs}}~\bibinfo{volume}{6}, pp. \bibinfo{pages}{385--400}.
\newblock \urlprefix\url{http://dx.doi.org/10.4230/LIPIcs.RTA.2010.385}.
\bibitemend

\bibitemstart{ZM10}
\bibinfo{author}{H.~Zankl} \& \bibinfo{author}{A.~Middeldorp}
  (\bibinfo{year}{2010}): \emph{\bibinfo{title}{Satisfiability of Non-Linear
  (Ir)rational Arithmetic}}.
\newblock In: {\sl \bibinfo{booktitle}{Proc.\ 16th International Conference on
  Logic for Programming, Artificial Intelligence and Reasoning}}. {\sl
  \bibinfo{series}{LNCS (LNAI)}} \bibinfo{volume}{6355}, pp.
  \bibinfo{pages}{481--500}.
\bibitemend

\bibliographyend
\end{thebibliography}
\end{document}